\newtheorem{theorem}{Theorem}
\newcommand{\thmref}[1]{Theorem~\ref{#1}}
\newtheorem{proposition}{Proposition}
\newcommand{\propref}[1]{Proposition~\ref{#1}}
\newtheorem{lemma}{Lemma}
\theoremstyle{definition}
\newenvironment{tech-remark}{\par\small\zremark}\singlespacing{\endzremark}
\newtheorem{definition}{Definition}
\newcommand{\<}{\langle}
\renewcommand{\>}{\rangle}
\newcommand{\idch}{\mathrm{id}}
\DeclareMathOperator\supp{supp}
\DeclareMathOperator{\Tr}{Tr}
\newcommand{\QC}{\operatorname{QC}}
\newcommand{\GO}{\operatorname{GO}}
\newcommand{\ext}{\mathrm{ext}}
\newcommand{\meas}{\mathrm{meas}}
\newcommand{\eras}{\mathrm{eras}}
\newcommand{\In}{\mathrm{in}}
\newcommand{\Eq}{\mathrm{eq}}
\newcommand{\cA}{\mathcal{A}}
\newcommand{\cF}{\mathcal{F}}
\newcommand{\cH}{\mathcal{H}}
\newcommand{\cK}{\mathcal{K}}
\newcommand{\cL}{\mathcal{L}}
\newcommand{\cM}{\mathcal{M}}
\newcommand{\cU}{\mathcal{U}}
\newcommand{\cV}{\mathcal{V}}
\newcommand{\M}{\mathsf{M}}
\newcommand{\A}{\mathsf{A}}
\newcommand{\one}{\mathds{1}}
\newcommand{\zero}{\mathds{O}}
\renewcommand{\ge}{\geqslant}
\renewcommand{\le}{\leqslant}
\newcommand{\eq}[1]{Eq.~\eqref{#1}}
\newcommand{\rank}[1]{\mathrm{rank}\left( {#1}\right)}
\begin{document}
\title{Universal validity of the second law of information thermodynamics}
\author{Shintaro Minagawa}
\email{minagawa.shintaro@nagoya-u.jp}
\affiliation{Graduate School of Informatics, Nagoya University, Furo-cho, Chikusa-Ku, Nagoya 464-8601, Japan}
\author{M. Hamed Mohammady  }
\email{m.hamed.mohammady@savba.sk}
\affiliation{QuIC, \'{E}cole Polytechnique de Bruxelles, CP 165/59, Universit\'{e} Libre de Bruxelles, 1050 Brussels, Belgium}
\affiliation{RCQI, Institute of Physics, Slovak Academy of Sciences, D\'ubravsk\'a cesta 9, Bratislava 84511, Slovakia}
\author{Kenta Sakai}
\email{sakai.kenta\_32@nagoya-u.jp}
\affiliation{Graduate School of Informatics, Nagoya University, Furo-cho, Chikusa-Ku, Nagoya 464-8601, Japan}
\affiliation{(until March 2023)}
\author{Kohtaro Kato}
\email{kokato@i.nagoya-u.ac.jp}
\affiliation{Graduate School of Informatics, Nagoya University, Furo-cho, Chikusa-Ku, Nagoya 464-8601, Japan}
\author{Francesco Buscemi}
\email{buscemi@nagoya-u.jp}
\affiliation{Graduate School of Informatics, Nagoya University, Furo-cho, Chikusa-Ku, Nagoya 464-8601, Japan}
\begin{abstract}
    Adiabatic measurements, followed by feedback and erasure protocols, have often been considered as a model to embody Maxwell's Demon paradox and to study the interplay between thermodynamics and information processing. Such studies have led to the conclusion, now widely accepted in the community, that Maxwell's Demon and the second law of thermodynamics can peacefully coexist because any gain provided by the demon must be offset by the cost of performing the measurement and resetting the demon's memory to its initial state. Statements of this kind are collectively referred to as \emph{second laws of information thermodynamics} and have recently been extended to include quantum theoretical scenarios. However, previous studies in this direction have made several assumptions, particularly about the feedback process and the demon's memory readout, and thus arrived at statements that are not universally applicable and whose range of validity is not clear. In this work, we fill this gap by precisely characterizing the full range of quantum feedback control and erasure protocols that are overall consistent with the second law of thermodynamics.  This leads us to conclude that the second law of information thermodynamics is indeed \textit{universal}: it must hold for any quantum feedback control and erasure protocol, regardless of the measurement process involved, as long as the protocol is overall compatible with thermodynamics. Our comprehensive analysis not only encompasses new scenarios but also retrieves previous ones, doing so with fewer assumptions. This simplification contributes to a clearer understanding of the theory. 
\end{abstract}

\maketitle

\section{Introduction}
The problem of consistency between the second law of thermodynamics and information processing has been at the center of one of the longest running debates in the history of modern physics, ever since Maxwell conjured up his famous demon~\cite{maxwell1871theory}.
A widely accepted solution to Maxwell's paradox is that consistency with the second law of thermodynamics is recovered by taking into account the work cost for measurement and erasure, i.e., the resetting of the demon's memory to its initial state~\cite{szilard1929uber,brillouin1951maxwell,Landauer,bennett1973logical,BennettComp,LeffRex,maruyama2009colloquium}.
These ideas, bridging thermodynamics with information theory, are nowadays collectively referred to as \emph{information thermodynamics}~\cite{ingarden-1997,sagawa2013information}.

In this context, and including a quantum theoretical scenario, Sagawa and Ueda, in a series of celebrated papers~\cite{sagawa-ueda2008second,sagawa2009minimal,sagawa2009erratum}, derived an achievable upper bound for the work extracted by feedback control and showed that the conventional second law can, in general, be violated from the viewpoint of the system alone, but such a violation is exactly compensated by the cost of implementing the controlling measurement and resetting the memory.
Such a tradeoff relation is what they call \emph{the second law of information thermodynamics (ITh)}.

Unfortunately, despite their importance, the balance equations established in Refs.~\cite{sagawa-ueda2008second,sagawa2009minimal,sagawa2009erratum} rely on several mutually inconsistent assumptions that lack a direct operational interpretation. Moreover, these works only discuss \emph{sufficient} conditions for the validity of such balance equations. While some generalizations and refinements have been proposed~\cite{Jacobs2009,funo2013integral,abdelkhalek2016fundamental, strasberg17,Mohammady2019c, Strasberg2019, Strasberg2020b, strasberg2022quantum, Latune2024}, the demon's memory readout process is always limited to \textit{ideal projective measurements}. Besides being unrealistic in practice, such an assumption is problematic \textit{in principle}: since the demon's memory enters directly into the thermodynamic balance, the process acting on it must be treated \textit{in full generality}, lest we obtain statements of limited scope. As a result, a comprehensive characterization of the validity range of the second law of ITh remains elusive, and it is unclear under what conditions the second law of ITh holds. In fact, at the time of writing, it is not even clear whether the second law of ITh should be considered a universal law or not, and what its logical status is with respect to the conventional second law of thermodynamics.

Our paper addresses this gap by adopting a top-down approach. Instead of attempting to \emph{derive} the second law from assumptions with unclear logical necessity, we initiate from a purely information-theoretic framework and obtain balance equations that hold \textit{for any measurement and isothermal feedback process}, in particular including any readout mechanism, and subsequently \emph{impose} the second law of phenomenological thermodynamics as a constraint. This approach, which follows that used by von Neumann to derive his entropy's equation~\cite{von1955mathematical,minagawa-2022}, allows us to determine exactly (in terms of sufficient \textit{and necessary} conditions) how far feedback control and erasure protocols can be generalized while remaining overall consistent with the second law.  We are then able to demonstrate the universal validity of the second law of ITh in general feedback control and erasure protocols: as long as such a protocol is compatible with the  second law of phenomenological thermodynamics, it must also satisfy the second law of ITh, regardless of the measurement and feedback process involved.

A quantity that plays a crucial role in our analysis is the Groenewold--Ozawa information gain \cite{groenewold1971problem,ozawa-1986-groen-info}: while previous works \cite{Jacobs2009, funo2013integral, abdelkhalek2016fundamental, Strasberg2019,danageozian2021thermodynamic,Mohammady2022} have also provided it with a thermodynamic interpretation---even in situations when it takes negative values---our balance equations show that such interpretation holds in complete generality.

\section{Results}

\begin{figure}[t]
    \centering
    \includegraphics[width=10cm]{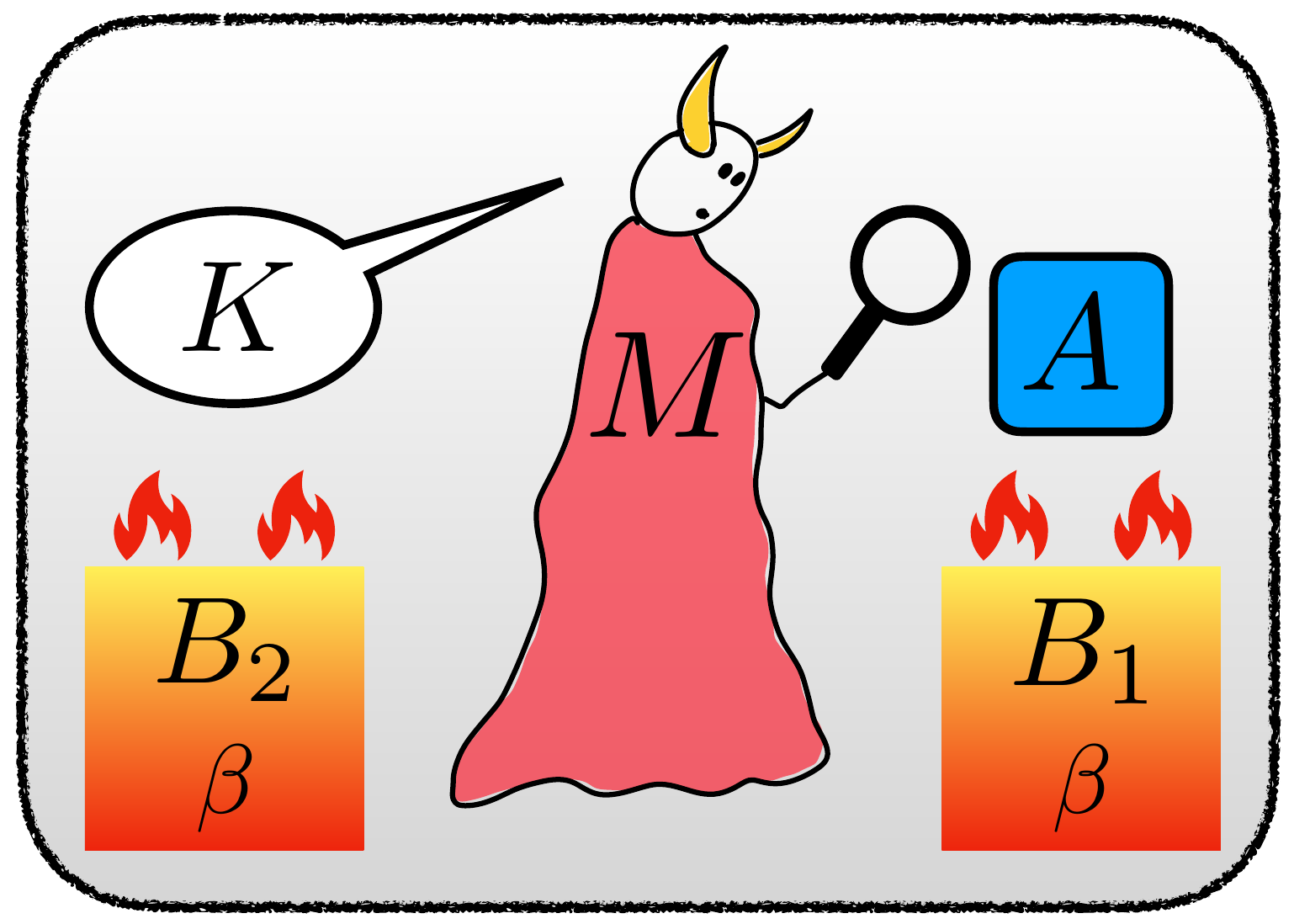}
    \caption{The systems appearing in our setup: the target system $A$, the controller (demon) consisting of an internal state $M$ and a classical register $K$, and two baths $B_1$ and $B_2$ at the same inverse temperature $\beta$.}
    \label{fig:schematic}
\end{figure}

The minimum scenario required to discuss Maxwell's paradox and feedback control protocols in full generality, but without oversimplifications, comprises five systems, as shown in Fig.~\ref{fig:schematic}: the physical system (i.e., the gas) being measured, denoted by $A$; the controller's (i.e., the demon's) internal state $M$ (where the letter ``M'' stands for ``Maxwell'', ``measurement apparatus'' or ``memory''); a classical register $K$ recording the measurement's outcomes; and two independent baths $B_1$ and $B_2$ (one used during the feedback control stage, the other used for the final erasure of the measurement), which are assumed to be at the same finite temperature. This means that the overall process is assumed to be isothermal.

Without any feedback control, for isothermal processes, the second law of thermodynamics is equivalent to the statement that the work extracted from the system $A$ can reach but not exceed the change in the free energy\footnote{These and other key concepts will be rigorously introduced and discussed in what follows. The purpose of these first few paragraphs is simply to provide a relatively informal overview of our main findings.} of the system---in formula, $W^A_\ext\le-\Delta F^A$. The main contribution of Ref.~\cite{sagawa-ueda2008second} was to show that if feedback control is allowed instead, the work extracted can go all the way up to $W^A_\ext=-\Delta F^A+\beta^{-1}I_{\QC}$, where $I_{\QC}$ is a non-negative term quantifying the amount of information collected by the measurement used to guide the subsequent feedback control protocol. In this sense, Maxwell's demon \textit{can} indeed violate the second law of thermodynamics, but this conclusion should come as no surprise, since the demon is not yet included in the global thermodynamic balance at this point.

Indeed, once the demon itself is embodied in a physical system, such a violation of the second law turns out to be only a \textit{local} violation, which is perfectly possible as long as it is compensated for elsewhere. According to Landauer's principle, such a compensation should be identified with the cost of performing the measurement and resetting the measurement apparatus and register at the end of the protocol, so that they are ready for use in the next round. Following this narrative, Refs.~\cite{sagawa2009minimal,sagawa2009erratum} list a number of assumptions about the quantum feedback protocol so that, as one would expect, the work cost of implementing the measurement and performing its erasure is lower bounded as $W^{MK}_\In\ge\beta^{-1}I_{\QC}$, thus guaranteeing that the total net work extracted $W_{\textrm{tot}}: =W^A_\ext-W^{MK}_\In\le -\Delta F^A$ is still within the limits of the second law of thermodynamics.

Our analysis begins by removing all assumptions from the consistency argument above. We argue that this is not just for the sake of mathematical generality, but is \textit{necessary} for two reasons. The first reason is that, specifically in relation to Refs.~\cite{sagawa-ueda2008second,sagawa2009minimal,sagawa2009erratum}, some of the assumptions made therein are, as we will show in what follows, extremely restrictive---so stringent, in fact, that they are inconsistent in most cases, constraining the analysis to trivial situations. The second reason is a matter of principle: if certain assumptions are required to restore the validity of the second law, the consistency between thermodynamics and quantum information processing cannot be considered universal, contrary to what folklore claims.

We then show that, when all assumptions about the mathematical form of the quantum feedback protocol are removed, the work extracted from the target system is upper bounded as
\begin{align}\label{eq:first-result}
W^A_\ext\le-\Delta F^A_{0\to4}+\beta^{-1}I_{\GO}\;,
\end{align}
while the work cost of implementing the measurement and its erasure is now lower bounded as
\begin{align}\label{eq:second-result}
W^{MK}_\In\ge\beta^{-1}[\Delta S^{AMK} + I_{\GO}]\;,
\end{align}
where $\Delta S^{AMK}$ denotes the  entropy change of the entire compound $AMK$ due to the measurement process and $I_{\GO}$ is the \textit{Groenewold--Ozawa information gain}~\cite{groenewold1971problem,ozawa-1986-groen-info}.
Note that while the bound~\eqref{eq:first-result} looks similar to the one given in~\cite{sagawa-ueda2008second}, the information quantity $I_{\GO}$ appearing in our bounds is different from the one used in Refs.~\cite{sagawa-ueda2008second,sagawa2009minimal,sagawa2009erratum}: in general, $I_{\GO}\gtreqqless I_{\QC}$. But while $I_{\QC}$ does not provide the correct bounds in general, $I_{\GO}$ does and, moreover, gives the same numerical values as $I_{\QC}$ in all cases considered in~\cite{sagawa-ueda2008second,sagawa2009minimal,sagawa2009erratum}.
Further, Eqs.~\eqref{eq:first-result} and~\eqref{eq:second-result} together imply that the net work extracted in general is bounded as
\begin{align}\label{eq:intermediate-overview}
    W_{\textrm{tot}}: =W^A_\ext-W^{MK}_\In\le -\Delta F^A-\beta^{-1}\Delta S^{AMK}\;.
\end{align}
In other words, even if the final erasure is implemented in accordance with Landauer's principle, the second law may still be violated whenever $\Delta S^{AMK}<0$.
Eqs.~\eqref{eq:first-result} and~\eqref{eq:second-result} constitute the main technical contributions of this work: their formal statement is given as Theorem~\ref{theorem:workbound} below.

Finally, by means of explicit counterexamples, we show that the axioms of quantum theory \emph{by themselves} are perfectly consistent with a measurement process that decreases the total entropy of the system-memory-register compound, implying a violation of the second law according to Eq.~\eqref{eq:intermediate-overview}. This leads us to the main conceptual contribution of this work, i.e. the conclusion that---contrary to some cursory accounts---in a quantum mechanical feedback process it is not enough to eventually perform an erasure process, as stipulated by Landauer's principle, to guarantee the validity of the second law. In other words, \textit{the second law of thermodynamics is logically independent of the axioms of quantum theory}, and its role is to constrain the set of possible measurement processes from the outset. Any attempt to \emph{prove} the second law from \emph{within} quantum theory is doomed to result in pure tautology~\cite{EARMAN1998435,Earman1999-EAREXT-4}.

\subsection{Framework}

Consider a quantum system $Y$ associated with a finite-dimensional Hilbert space $\mathcal H^Y$. The algebra of linear operators $L^Y$ on $\cH^Y$ will be denoted as $\cL(\cH^Y)$, $\one^Y$ and $\zero^Y$ denoting the unit and null  operators, respectively. States on $Y$ are represented by unit-trace positive operators, i.e., $\rho^Y \ge \zero^Y$,  $\Tr[\rho^Y]=1$.  A \emph{thermodynamic system} $Y$ is defined as the tuple $(\rho^Y; H^Y; \beta)$, where  $H^Y$ is the Hamiltonian and  $\beta:=1/k_BT>0$ is the inverse temperature of an external thermal bath,  with $k_B$  Boltzmann's constant. Throughout, we shall only consider the case where the thermal bath has a constant temperature, and so for notational simplicity we will abbreviate the thermodynamic system as $(\rho^Y; H^Y)$.  When the system is in thermal equilibrium, the  \emph{thermal state} or \emph{Gibbs state} is defined as $\gamma^Y:=e^{-\beta H^Y}/Z^Y $, where $Z^Y :=\Tr[e^{-\beta H^Y}]$ is the partition function.

\begin{figure}[t]
    \centering
    \includegraphics[width=12cm]{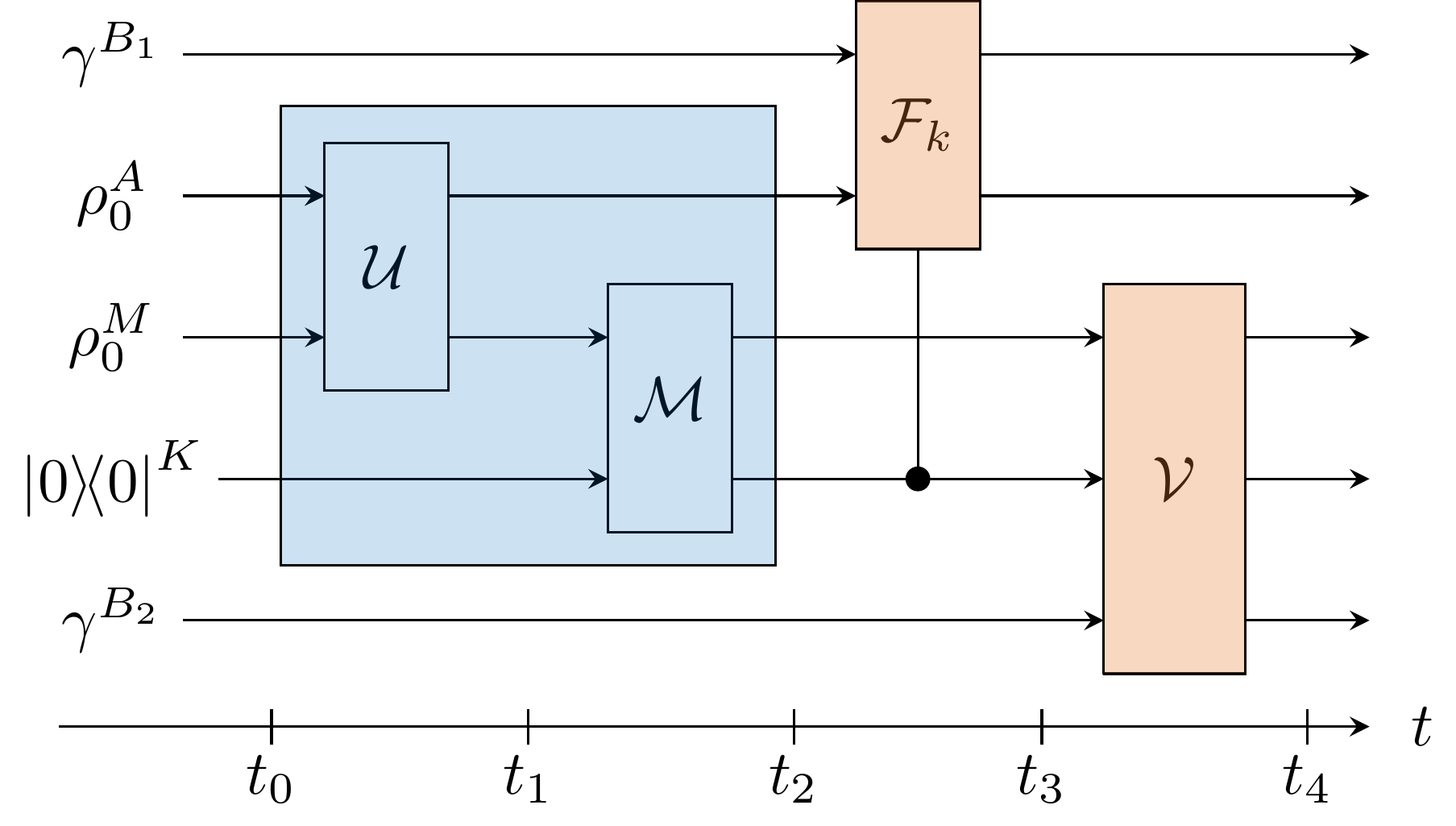}
    \caption{The circuit representation of a general quantum feedback control and erasure protocol. 
    \textbf{Interaction step} ($t_0\to t_1$): system $A$ and memory $M$ interact by a unitary channel $\cU$. 
    \textbf{Readout step} ($t_1\to t_2$): an instrument $\cM$ is applied on the memory $M$  and the outcome $k$ is written on the classical register $K$. 
    The interaction step and the readout step together are referred to as the measurement step.
    \textbf{Feedback control step} ($t_2\to t_3$): a controlled unitary channel $\cF_k$ is applied on the compound of system $A$ and thermal bath $B_1$ depending on the outcome $k$.
    \textbf{Erasure step} ($t_3\to t_4$): a unitary channel $\cV$ is applied on the compound of $MK$ and thermal bath $B_2$, so as to return the state of $MK$ to its initial configuration. The total compound system is assumed to evolve adiabatically during the entire protocol, that is, no heat is exchanged with any outside source.}
    \label{fig:process}
\end{figure}

The generalized quantum feedback control and erasure protocols we shall consider will  comprise of five discrete time steps $t_i$, $i=0,1,2,3,4$. The total system is composed of a target system $A$, a controller  consisting of a memory $M$ and a classical register $K$, and two thermal baths $B_1, B_2$, both of which have the same inverse temperature $\beta>0$, as depicted in Figure~\ref{fig:schematic}.  For notational simplicity, we shall omit superscripts when denoting any quantity pertaining to the \emph{entire} compound $B_1 A M K B_2$, reserving their use only when discussing subsystems; for example, the state of subsystem $AMK$ at time step $t_i$ will be denoted as $\rho^{AMK}_i := \Tr_{B_1 B_2}[\rho_i]$, etc.  In particular, we shall assume that the Hamiltonian at time step $t_i$ reads $H_i = H^{B_1} + H_i^A + H^{MK} + H^{B_2}$. That is, at each time step we assume that there are no interaction terms between the different subsystems, and only the Hamiltonian of the target system $A$ may change. The protocol is represented schematically in Figure~\ref{fig:process}; below we shall describe each step in detail.

\bigskip\indent {\bf The preparation step.}
At the initial time  $t=t_0$, the compound system is prepared in the state
\begin{equation}\label{eq:initial-state-total}
  \rho_0 := \gamma^{B_1} \otimes \rho^A_0\otimes\rho^M_0\otimes|0\>\<0|^K\otimes\gamma^{B_2}
\end{equation}
where $\rho^A_0$ and $\rho^M_0$ are arbitrary states on $A$ and $M$, respectively, while $\ket{0}^K$ represents the idle state of the classical register\footnote{
Note that the memory considered in Ref.~\cite{sagawa2009minimal} is described by a Hilbert space with a direct sum structure.
Here we describe the degrees of freedom of the labels of the blocks and the internal states of the memory using different quantum systems.
In the context of our paper, the two pictures are clearly equivalent.}, and $\gamma^{B_1}, \gamma^{B_2}$ are the thermal states of the baths, with respect to the same inverse temperature $\beta$. Note that a common assumption is that the initial state of the memory $\rho_0^M$ is thermal at the same inverse temperature $\beta$ as the two baths: while such an assumption is very reasonable from a physical point of view, and in particular facilitates the discussion of the erasure step (see below), for the sake of generality we keep $\rho_0^M$ arbitrary.

\bigskip \indent {\bf The measurement step.}
This step comprises an interaction step and a readout step. The \emph{interaction} or \emph{pre-measurement} step (from $t=t_0$ to $t=t_1$) represents the interaction between $A$ and $M$, described by a unitary channel $\cU(\cdot):=U(\cdot) U^\dagger$ acting in $AM$. The \emph{readout} or \emph{pointer objectification} step (from $t=t_1$ to $t=t_2$) is represented as a \emph{CP-instrument}~\cite{ozawa1984quantum} acting in $M$, namely, a family $\cM := \{\cM_k: k\in\cK \}$ of completely positive linear maps $\cM_k: \cL(\cH^M) \to \cL(\cH^M)$, labeled by the measurement outcomes $k \in \cK$,  such that their sum $\cM_{\cK} := \sum_{k\in\cK}\cM_k$ is trace-preserving, i.e., a channel. The instrument $\cM$ is associated with a unique positive operator-valued measure (POVM) $\M := \{\M_k : k \in \cK\}$, with elements defined using the ``Heisenberg picture'' dual of $\cM_k$ as $\M_k := \cM_k^*(\one^M)$. Since the POVM $\M$ acts in the memory, it is referred to as the \emph{pointer observable}.  After $M$ is measured by the instrument $\cM$, the observed outcome $k$ is recorded in the classical register. Such classical readouts are assumed to be all perfectly distinguishable, and thus are represented, following a common convention in quantum information theory~\cite{wilde2017quantum-book}, by orthogonal pure states $\ket{k}^K$.

Accordingly, at $t=t_2$ the state of the compound system reads 
\begin{align}\label{eq:state-after-measurement}
 \rho_2 &:= \gamma^{B_1} \otimes \bigg( \sum_{k\in\cK}(\mathrm{id}^A\otimes\cM_k) \Big[\cU (\rho^{A}_0 \otimes \rho^M_0)\Big]\otimes\ket{k}\!\!\bra{k}^K\bigg) \otimes \gamma^{B_2} =:\sum_{k \in \cK} p_k \, \rho_{2,k}\;,
\end{align}
where $\idch^A$ denotes the identity channel acting in $A$, and 
\begin{align*}
    \rho_{2,k} :=  \gamma^{B_1} \otimes \rho^{AM}_{2,k} \otimes |k\>\<k|^K \otimes \gamma^{B_2}\;,
\end{align*}
with
\begin{equation*}
    \rho^{AM}_{2,k} :=\frac{(\mathrm{id}^A\otimes\cM_k) \Big[\cU (\rho^{A}_0 \otimes \rho^M_0)\Big]}{p_k}\;
\end{equation*}
whenever the probability of obtaining  outcome $k$ satisfies
\begin{equation*}
    p_k:=\Tr\bigg\{(\mathrm{id}^A\otimes\cM_k) \Big[\cU (\rho^{A}_0 \otimes \rho^M_0)\Big]\bigg\}>0\;,
\end{equation*}
otherwise $\rho^{AM}_{2,k}$ can be defined arbitrarily.

We note that a fixed tuple $(\cH^M, \rho^M_0, \cU,\cM)$ defines a \emph{measurement process} or \emph{measurement scheme} for an instrument $\cA := \{\cA_k : k \in \cK\}$ acting in the target system $A$, with the operations reading
\begin{align}\label{eq:system-instrument}
    \cA_k(\cdot) := \Tr_{M}\bigg\{(\idch^A \otimes \cM_k)  \Big[\cU(\cdot \otimes \rho^M_0)\Big]\bigg\} \equiv \Tr_{M}\Big[\one^A \otimes \M_k\ \ \cU(\cdot \otimes \rho^M_0)\Big].
\end{align}
In particular, we stress that an instrument on the target system $\cA := \{\cA_k : k \in \cK\}$ can be realized by means of infinitely many different measurement processes. One of the results of this work will be to show that the laws of thermodynamics constrain the latter, not the former.

\begin{tech-remark}
The formalism of CP-instruments provides the most general readout (i.e., pointer objectification) procedure allowed by quantum theory. While general instruments in the target system $A$ have been considered before, all previous works have focused on a restricted class of instruments acting in the memory $M$, namely, L\"uders instruments compatible with a projection-valued measure (PVM), also known as ``ideal projective measurements'' ~\cite{sagawa2009minimal,sagawa2009erratum, Jacobs2009,funo2013integral,abdelkhalek2016fundamental,strasberg17,Mohammady2019c, Strasberg2019, Strasberg2020b,  strasberg2022quantum, Latune2024}. 
 $\M$ is a PVM if the effects  $\M_k$  are mutually orthogonal projections, and the operations of the corresponding $\M$-compatible L\"uders instrument read   $\cM_k^L(\cdot) := \M_k (\cdot) \M_k$.  As shown by Ozawa~\cite{ozawa1984quantum},  every instrument acting in $A$ admits a canonical measurement scheme, where $\rho^M_0$ is chosen to be pure and the pointer observable is chosen to be a PVM. But we stress that the pointer observable in a given measurement process need not be a PVM; and, even if it is, the instrument measuring it need not be of the L\"uders form. In fact,   it is well known that every observable  $\M$ admits infinitely many $\M$-compatible instruments.   
\end{tech-remark}

\bigskip\indent {\bf The feedback control step.}
From $t=t_2$ to $t=t_3$, a feedback control protocol is performed. This is implemented by coupling the compound $AK$ with the thermal bath $B_1$ by  a  unitary channel $\cF(\cdot) := F (\cdot) F^\dagger$, defined by the unitary operator \footnote{Note that unitarity of $F$ implicitly assumes that $K$ is represented by a Hilbert space $\cH^K$ of dimension equal to the number of measurement outcomes, i.e, $\dim(\cH^K) = |\cK|$.} 
\begin{align*}
    F := \sum_{k\in \cK} F_k \otimes  |k\>\<k|^K. 
\end{align*}
Here, $F_k$ are unitary operators  on  $B_1A$, which induce the unitary channel $\cF_k (\cdot) := F_k (\cdot) F_k^\dagger$  conditional on the classical register having recorded outcome $k$. At time step $t=t_3$, the state of the compound reads
\begin{equation}\label{eq:state-after-feedback}
  \rho_3 := (\cF^{B_1AK} \otimes \idch^{MB_2} )(\rho_2) =   \sum_{k\in \cK} p_k\, \rho_{3,k}\;,    
\end{equation}
where 
\begin{align*}
 \rho_{3,k} = \rho_{3,k}^{B_1AM}\otimes\ket{k}\!\!\bra{k}^K\otimes \gamma^{B_2}\;.
\end{align*}
Here, $\rho^{B_1AM}_{3,k}=(\cF_k^{B_1A}\otimes\mathrm{id}^M)(\gamma^{B_1} \otimes \rho^{AM}_{2,k})$. We shall say that the feedback process is \emph{pure unitary} if we choose $F_k = \one^{B_1} \otimes F_k^A$, so that for each outcome the target system undergoes an isolated unitary evolution. In other words, a pure unitary feedback process does not involve the thermal bath. This is the case considered in, e.g., Refs.~\cite{sagawa-ueda2008second,sagawa2009minimal,sagawa2009erratum}. However, since Szilard~\cite{szilard1929uber} onward, the traditional formulation typically considers a feedback protocol that is done in contact with a thermal bath, as we do here.

\bigskip\indent {\bf The erasure step.} 
Lastly, the erasure process from $t=t_3$ to $t=t_4$ is modeled by coupling $MK$ with the thermal bath $B_2$ by a unitary channel $\cV(\cdot):=V(\cdot)V^\dagger$. We naturally assume that $H^A_3=H^A_4$, since the target system $A$ remains dormant. At time step $t_4$, the state of the compound system will read
\begin{equation}\label{eq:erasure}
\rho_4 := (\idch^{B_1 A} \otimes \cV^{MKB_2})(\rho_3)\;,   
\end{equation}
such that, by definition of ``erasure'', $\rho^{MK}_4 = \rho^{MK}_0 = \rho^M_0 \otimes |0\>\<0|^K$. That is, the interaction between $MK$ and the bath $B_2$ returns the local state of $MK$ back to its initial configuration. Such a setting appears in the context of \emph{Landauer's principle}~\cite{Landauer,reeb2014improved,buscemi2020thermodynamic}. If, in addition, it holds that $\rho^{AMK}_4 = \rho_4^A \otimes \rho^M_0 \otimes |0\>\<0|^K $, i.e., if the correlations between $A$ and $MK$ are also erased,  then we say that the erasure is \emph{perfect}. Otherwise, we call the erasure \emph{partial}.   While in principle perfect erasure can always be achieved if a suitable bath is provided, it is a non-trivial problem to determine whether such a unitary erasure process always exists for a \textit{given} bath.  To alleviate this problem, we also consider here protocols that include \textit{partial} erasure. As mentioned above when discussing the preparation step, a conceptually simpler situation occurs when the initial state of the memory is thermal at the same bath temperature, so that the erasure process can be intuitively understood as a thermalization process.

\subsection{About injected and extracted work, and the assumption of overall adiabaticity}

The internal energy of a thermodynamic system  is $E(\rho^Y; H^Y) := \Tr[\rho^Y H^Y]$, and the non-equilibrium free energy~\cite{gaveau1997general,esposito2011second} is $F(\rho^Y; H^Y):= E(\rho^Y; H^Y)-\beta^{-1}S(Y)_\rho$, where $S(Y)_\rho :=-\Tr[\rho^Y\ln\rho^Y]$ is the von Neumann entropy~\cite{von1955mathematical}.  When a thermodynamic system  transforms from $t=t_i$ to $t=t_j$ as $(\rho^Y_i; H^Y_i) \mapsto (\rho^Y_j; H^Y_j)$,   we denote the increase in internal energy $E$,  nonequilibrium free energy $F$,  and entropy $S$ as follows:
\begin{equation}
    \Delta x^{Y}_{i\to j}:=x(\rho^Y_j;H^Y_j)-x(\rho^Y_i;H^Y_i)\quad(x=E, F ,S)\;.
\end{equation}

\begin{definition}\label{defn:work-adiabatic}
Consider a thermodynamic system which transforms as $(\rho^Y_i; H^Y_i) \mapsto (\rho^Y_j; H^Y_j)$. The transformation is defined as \emph{adiabatic} if it does not involve an exchange of heat with an external bath.   In such a case, by the first law of thermodynamics, the work injected into (resp., extracted from) the system is defined as the increase (resp., decrease) in internal energy, i.e., 
 \begin{align*}
   W_\In^Y \equiv -W_\ext^Y:= \Delta E^Y_{i \to j}\;.
\end{align*}
 \unskip\nobreak\hfill $\square$
\end{definition}

In our formalism, all thermal baths (i.e., the systems $B_1$ and $B_2$) are treated as \emph{internal} and so there are no \emph{external} baths with which heat is exchanged. Moreover, following a well-established convention dating back to Szilard~\cite{szilard1929uber} and von Neumann~\cite{von1955mathematical}, and routinely adopted until these days~\cite{sagawa-ueda2008second,sagawa2009minimal,abdelkhalek2016fundamental,Mancino2017,  Purves-2020, Panda2023}, we assume that the pointer objectification implemented by the instrument $\cM$ is also adiabatic, although it is obviously non-unitary. This may be justified if, for example, the objectification process is sufficiently fast  with respect to the time scale required for heat to dissipate~\cite{Smith_2018}.
Concerning the rest, i.e., during the premeasurement, feedback, and erasure steps of the protocol, the total compound transforms  by a global unitary channel which, by definition, does not involve an interaction with \textit{any} external system, and so clearly no heat is exchanged here either.
In conclusion, while the subsystem $AMK$ exchanges heat with $B_1$ and $B_2$ during the feedback and erasure steps, respectively, we treat the total compound $B_1 AMK B_2$ as transforming adiabatically during the entire protocol.

Since the total process is adiabatic,  the net extracted work  is identified with the decrease in internal energy of the entire compound, that is, $W_{\textrm{tot}} = -\Delta E_{0 \to 4}$.
Now we wish to split the contribution to the total work as that originating from the target system $A$ and that originating from the controller $MK$. To this end, we note that the target system is involved only during the measurement  and feedback steps,  the controller is involved only during the measurement  and erasure steps, the thermal bath $B_1$ is involved only during the feedback step, and the thermal bath $B_2$ is involved only during  the erasure step. As such, we may write (see Methods, Section \ref{app:preliminaries})
\begin{align}\label{eq:net-work-extracted}
    W_{\textrm{tot}} &= -\Delta E_{0 \to 4} \nonumber \\
    & = -\Delta E_{0 \to 2} -\Delta E_{2 \to 3} -\Delta E_{3 \to 4} \nonumber  \\
    & = -\Delta E_{0 \to 2}^A  -\Delta E_{0 \to 2}^{MK} -\Delta E_{2 \to 3}^{B_1 A}  -\Delta E_{3 \to 4}^{MK B_2} \nonumber \\
    & = W_\ext^{A} - W_\In^{MK}, 
\end{align}
where    
\begin{align}\label{eq:work-extracted-A}
W^{A}_\ext&:=  -\Delta E^A_{0 \to 2} - \Delta E_{2 \to 3} \equiv -\Delta E^A_{0 \to 2} - \Delta E_{2 \to 3}^{B_1 A}  \;
\end{align}
is the  work extracted from the target system, and 
\begin{equation}\label{eq:work-injected-MK}
W^{MK}_\In:= \Delta E^{MK}_{0 \to 2} + \Delta E_{3 \to 4} \equiv \Delta E^{MK}_{0 \to 2} + \Delta E_{3 \to 4}^{MKB_2} \;
\end{equation}
is the work injected into the controller.

\subsection{General work bounds}
Before providing general bounds for the work defined in Eqs.~\eqref{eq:work-extracted-A} and \eqref{eq:work-injected-MK}, let us first introduce  some useful information-theoretic  quantities.  For any state $\rho^A$ and a positive operator $\sigma^A$ such that $\supp(\rho^A) \subseteq \supp(\sigma ^A) $,   the \emph{Umegaki quantum relative entropy} is defined by $D(\rho^A\|\sigma ^A):=\Tr[\rho^A( \ln\rho^A-\ln\sigma ^A)] \geqslant 0$ \cite{umegaki-q-rel-ent-1961}, which is non-negative due to Klein's inequality~\cite{klein1931zur,nielsen_chuang_2010}, and vanishes if and only if $\rho^A = \sigma ^A$. The \textit{quantum mutual information} of a bipartite state $\rho^{AB}$ is defined as $I(A\!:\!B)_\rho:=S(A)_\rho+S(B)_\rho-S(AB)_\rho \equiv D(\rho^{AB} \| \rho^A \otimes \rho^B) \ge 0$, with equality if and only if $\rho^{AB}=\rho^A \otimes \rho^B$. On the other hand, the \textit{conditional quantum entropy}  of a bipartite state $\rho^{AB}$ is defined as $S(A|B)_{\rho}:=S(AB)_{\rho}-S(B)_{\rho}$, which can be negative.  The \textit{conditional quantum mutual information} of a tripartite state $\rho^{ABC}$ is defined as $I(A\!:\!C|B)_\rho:=S(A|B)_{\rho}+S(C|B)_{\rho}-S(AC|B)_{\rho} \geqslant 0$, where the non-negativity follows from the strong subadditivity of the von Neumann entropy (see, e.g., Ref.~\cite{wilde2017quantum-book}). Finally, we introduce the  following information measure related to the measurement process on the target system:
\begin{definition}
    The \emph{Groenewold--Ozawa information gain}~\cite{groenewold1971problem,ozawa-1986-groen-info} of the target system's measurement process is defined as:
    \begin{equation}
        I_{\GO}:=S(A)_{\rho_0}-S(A|K)_{\rho_2}\;,
    \end{equation}
    where the entropy of the post-measurement state of the target system conditioned by the classical register, $S(A|K)_{\rho_2}$, can equivalently be written as $\sum_kp_kS(\rho^A_{2,k})$, i.e., the average entropy of the posterior states of $A$. 
    \unskip\nobreak\hfill $\square$
\end{definition}

\begin{tech-remark}
    Note that $I_{\GO}$ is determined entirely by the prior system state $\rho^A_0$ and the instrument $\cA$ acting in $A$ as defined in \eq{eq:system-instrument}. The Groenewold--Ozawa information gain is guaranteed to be non-negative for all prior states $\rho_0^A$ if and only if the instrument $\cA$ is \emph{quasi-complete}; $\cA$ is called quasi-complete if for all pure prior states $\rho_0^A$, the posterior states $\rho^A_{2,k}:=\cA_k(\rho^A_0)/p_k$ are also pure. An example of a quasi-complete instrument is an \emph{efficient} instrument, whereby each operation can be written with a single Kraus operator, i.e., $\cA_k(\cdot) =  L_k (\cdot) L_k^\dagger$. In general, therefore,  $I_{\GO}$  can be negative~\cite{ozawa-1986-groen-info}.
\end{tech-remark}

The following proposition gives universally valid expressions for the work associated with feedback control and erasure protocols with a general quantum measurement  process, independent of thermodynamics and from a purely information-theoretic point of view.

\begin{proposition}\label{prop:general_formula}
    In the generalized quantum feedback control and erasure protocol (Fig.~\ref{fig:process}), the extracted work from the system is
    \begin{equation}\label{eq:extracable_holevo-go}
        W^A_\ext=-\Delta F^A_{0\to4}+\beta^{-1}\left[I_{\GO}-I(A\!:\!K)_{\rho_3} - S_\mathrm{irr}^{B_1} \right]\;, 
    \end{equation}
    and the work needed to run the controller is
    \begin{equation}\label{eq:work_in_go}
        W^{MK}_\In= \beta^{-1} \Delta S^{AMK}_{0\to2} + \beta^{-1}\left[I_{\GO} + I(A:M|K)_{\rho_2}+ S_\mathrm{irr}^{B_2}\right]\;,
    \end{equation}
where 
\begin{align*}
    S_\mathrm{irr}^{B_1} &:= \sum_{k\in \cK} p_k \bigg( I(A:B_1)_{\rho_{3,k}} + D(\rho^{B_1}_{3,k} \| \gamma^{B_1} )\bigg) \geqslant 0 , \nonumber \\
    S_\mathrm{irr}^{B_2} &:= I(MK\!:\!B_2)_{\rho_4}+D(\rho^{B_2}_4\|\gamma^{B_2}) \geqslant 0,
\end{align*}
denote the irreversible entropy production associated with the isothermal feedback and erasure steps. \unskip\nobreak\hfill $\square$
\end{proposition}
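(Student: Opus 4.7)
The strategy is to handle each of the two balance equations separately. Both reduce, thanks to the assumption that each bath $B_i$ is initially Gibbs, to the energy change of that single bath, which is converted to entropy via the Clausius-type identity
\begin{equation*}
E(\rho^B;H^B) - E(\gamma^B;H^B) = \beta^{-1}\bigl[S(\rho^B) - S(\gamma^B) + D(\rho^B\|\gamma^B)\bigr],
\end{equation*}
a one-line consequence of $D(\rho\|\gamma) = \beta[F(\rho) - F(\gamma)]$ and the definition of non-equilibrium free energy. The remainder of the proof is information-theoretic bookkeeping.

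For $W^A_\ext$, I would first use the absence of interaction terms in each $H_i$, together with $\rho^A_4 = \rho^A_3$ (erasure does not touch $A$) and $H^A_3=H^A_4$, to rewrite $W^A_\ext = -\Delta E^A_{0\to 2} - \Delta E^{B_1A}_{2\to 3}$ in the compact form $W^A_\ext = -\Delta E^A_{0\to 4} - \Delta E^{B_1}_{2\to 3}$. Applying the Clausius identity to the $B_1$ term reduces the claim to a purely information-theoretic identity between $-\Delta S^{B_1}_{2\to 3}$, $\Delta S^A_{0\to 4}$, and the information quantities on the right-hand side of Eq.~\eqref{eq:extracable_holevo-go}. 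The crucial input is that $\cF_k$ is unitary \emph{per outcome} $k$, which yields $S(B_1 A | K)_{\rho_3} = S(\gamma^{B_1}) + S(A|K)_{\rho_2}$; combined with the chain rule for the conditional entropy of $B_1A$ given $K$ and the decomposition $S(B_1)_{\rho_3} = S(B_1|K)_{\rho_3} + I(B_1\!:\!K)_{\rho_3}$, this expresses $S(\rho^{B_1}_3)$ in terms of $S(A|K)_{\rho_2}$, $S(A|K)_{\rho_3}$, the per-outcome $I(A\!:\!B_1)_{\rho_{3,k}}$, and $I(B_1\!:\!K)_{\rho_3}$. The final reassembly into $I_{\GO} - I(A\!:\!K)_{\rho_3} - S_\mathrm{irr}^{B_1}$ uses the ``flattening'' identity
\begin{equation*}
\sum_{k\in\cK} p_k\, D(\rho^{B_1}_{3,k}\|\gamma^{B_1}) = D(\rho^{B_1}_3\|\gamma^{B_1}) + I(B_1\!:\!K)_{\rho_3},
\end{equation*}
which follows immediately from the linearity of $-\Tr[\rho\ln\gamma]$ in $\rho$, and is exactly what makes the spurious $I(B_1\!:\!K)_{\rho_3}$ cancel so that only $S_\mathrm{irr}^{B_1}$ survives.

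The calculation for $W^{MK}_\In$ is considerably cleaner. Since $\cF$ never touches $MK$ we have $\rho^{MK}_3 = \rho^{MK}_2$, and by the definition of the erasure step $\rho^{MK}_4 = \rho^{MK}_0$; consequently $\Delta E^{MK}_{3\to 4} = -\Delta E^{MK}_{0\to 2}$, and the definition of $W^{MK}_\In$ collapses to $W^{MK}_\In = E^{B_2}_4 - E^{B_2}_\gamma$. The Clausius identity then gives $W^{MK}_\In = \beta^{-1}\bigl[S(\rho^{B_2}_4) - S(\gamma^{B_2}) + D(\rho^{B_2}_4\|\gamma^{B_2})\bigr]$, and the unitarity of $\cV$ on $MKB_2$ applied to the product state $\rho^{MK}_3 \otimes \gamma^{B_2}$ yields $S(\rho^{B_2}_4) - S(\gamma^{B_2}) = \Delta S^{MK}_{0\to 2} + I(MK\!:\!B_2)_{\rho_4}$. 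What remains is the purely algebraic identity
\begin{equation*}
\Delta S^{MK}_{0\to 2} - \Delta S^{AMK}_{0\to 2} = I_{\GO} + I(A\!:\!M|K)_{\rho_2},
\end{equation*}
which I would prove by direct expansion: since both $\rho^{AMK}_0$ and $\rho^{AMK}_2$ are classical-quantum on $K$, the common Shannon term $H(\{p_k\})$ cancels, $S(A)_{\rho_0}$ combines with $-S(A|K)_{\rho_2}$ to form exactly $I_{\GO}$, and the leftover averages over $k$ of $S(\rho_{2,k}^M) - S(\rho_{2,k}^{AM}) + S(\rho_{2,k}^A)$ assemble into $I(A\!:\!M|K)_{\rho_2}$.

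The main obstacle is not any deep inequality --- the non-negativity of $I(B_1\!:\!K)_{\rho_3}$, $I(MK\!:\!B_2)_{\rho_4}$, and the relative entropies is used only to \emph{label} them as irreversible entropy productions --- but rather the meticulous bookkeeping of which conditional entropies are preserved by which unitary step, and in particular the flattening identity for $D(\rho^{B_1}_3\|\gamma^{B_1})$, without which the unconditional and per-outcome relative-entropy contributions would not combine correctly into $S_\mathrm{irr}^{B_1}$.
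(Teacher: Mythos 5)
Your proposal is correct and follows essentially the same route as the paper: the Clausius-type identity you invoke for an initially Gibbs bath is exactly the content of the paper's Lemma~\ref{lemma:unitary-isothermal-work}, and your final algebraic identity $\Delta S^{MK}_{0\to 2} - \Delta S^{AMK}_{0\to 2} = I_{\GO} + I(A\!:\!M|K)_{\rho_2}$ is precisely the paper's Eq.~\eqref{eq:conditional-mutual-GO}. The only cosmetic difference is in the $B_1$ bookkeeping: the paper applies the work lemma per outcome $k$ and averages, which produces $S_\mathrm{irr}^{B_1}$ directly in its per-outcome form, whereas you work with the unconditional bath state and recover the same expression via the flattening identity --- both are valid.
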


See Methods, Section~\ref{appendix:general_formula}, for the proof. We immediately see that \eq{eq:extracable_holevo-go} contains, besides the usual free energy change, a correction term that arises from the specific implementation of measurement and feedback protocol. Similarly, \eq{eq:work_in_go} contains additional correction terms to the usual entropy change of target system and controller.

We note that an equality similar to \eq{eq:extracable_holevo-go} was obtained in Ref. \cite{Jacobs2009}, except that there the entropy production  $S_\mathrm{irr}^{B_1}$ as well as the mutual information   $I(A:K)_{\rho_3}$ was missing. The term $I(A:K)_{\rho_3} = S(\rho_3^A) - \sum_{k \in \cK} p_k S(\rho^A_{3,k})$ corresponds to  the \textit{Holevo information} of the conditional states of $A$ after feedback~\cite{holevo1973bounds}, which is non-negative and vanishes if and only if $\rho^A_{3,k} = \rho^A_{3}$ for all $k$. Ref.~\cite{sagawa2012thermodynamics} also derives a similar equality, but it uses the QC-mutual information, and not the Groenewold--Ozawa information gain.

From \propref{prop:general_formula}, by discarding terms that are always either positive or negative,    we obtain universally valid  bounds for injected and extracted work in  quantum feedback control and erasure protocols, as well as necessary and sufficient conditions for their saturation.
\begin{framed}
\begin{theorem}\label{theorem:workbound}
In the generalized quantum feedback control and erasure protocol (Fig.~\ref{fig:process}), the work extracted from the target system is upper bounded as
\begin{equation}\label{eq:A-extractable}
    W^A_\ext\le-\Delta F^A_{0\to4}+\beta^{-1}I_{\GO}\;,
    \end{equation}
    where the equality holds if and only if $I(A\!:\!K)_{\rho_3}= S_\mathrm{irr}^{B_1} =0$.    The work cost to run the controller is lower bounded as
    \begin{equation}\label{eq:lower_demon}
    W^{MK}_\In\ge\beta^{-1}[\Delta S^{AMK}_{0\to2} + I_{\GO}]\;,
\end{equation}
where the equality holds if and only if $I(A\!:\!M|K)_{\rho_2} = S_\mathrm{irr}^{B_2} =0$.
\unskip\nobreak\hfill $\square$
\end{theorem}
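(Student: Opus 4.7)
The plan is to derive Theorem~\ref{theorem:workbound} as a direct corollary of Proposition~\ref{prop:general_formula}, which already provides the two equalities \eqref{eq:extracable_holevo-go} and \eqref{eq:work_in_go}. Granted those equalities, the entire task reduces to identifying the sign of each correction term appearing on the right-hand sides and then discarding those with a fixed sign to obtain the desired inequalities.

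First I would isolate the four correction quantities: $I(A\!:\!K)_{\rho_3}$, $S_\mathrm{irr}^{B_1}$, $I(A\!:\!M|K)_{\rho_2}$, and $S_\mathrm{irr}^{B_2}$. The mutual information $I(A\!:\!K)_{\rho_3}$ is non-negative by Klein's inequality applied to $D(\rho_3^{AK}\|\rho_3^A\otimes\rho_3^K)$, and vanishes iff $\rho_3^{AK}=\rho_3^A\otimes\rho_3^K$. The conditional mutual information $I(A\!:\!M|K)_{\rho_2}$ is non-negative by the strong subadditivity of the von Neumann entropy, as already noted in the preliminaries. The two irreversible entropy productions $S_\mathrm{irr}^{B_1}$ and $S_\mathrm{irr}^{B_2}$ are explicitly written in Proposition~\ref{prop:general_formula} as sums of a (conditional) mutual information and a relative entropy to the Gibbs state, both non-negative by standard information-theoretic inequalities; non-negativity of the convex combination defining $S_\mathrm{irr}^{B_1}$ follows from non-negativity of each summand with weights $p_k\ge 0$.

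Having established that all four quantities are non-negative, the bound \eqref{eq:A-extractable} follows by dropping $-\beta^{-1}[I(A\!:\!K)_{\rho_3}+S_\mathrm{irr}^{B_1}]$ from the right-hand side of \eqref{eq:extracable_holevo-go}, and the bound \eqref{eq:lower_demon} follows analogously by dropping $\beta^{-1}[I(A\!:\!M|K)_{\rho_2}+S_\mathrm{irr}^{B_2}]$ from \eqref{eq:work_in_go}. For the saturation conditions, the key observation is that a sum of non-negative terms equals zero if and only if each summand equals zero; hence equality in \eqref{eq:A-extractable} is equivalent to $I(A\!:\!K)_{\rho_3}=0$ and $S_\mathrm{irr}^{B_1}=0$ simultaneously, and equality in \eqref{eq:lower_demon} is equivalent to $I(A\!:\!M|K)_{\rho_2}=0$ and $S_\mathrm{irr}^{B_2}=0$ simultaneously, exactly as stated.

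There is essentially no obstacle of substance here: all the real technical work has been absorbed into Proposition~\ref{prop:general_formula}. The only point requiring a line of care is the non-negativity of the convex combination defining $S_\mathrm{irr}^{B_1}$, but this is immediate since each conditional bath term is itself a sum of a mutual information and a relative entropy. In particular, no additional thermodynamic input is needed for the theorem itself; it is a purely algebraic consequence of the proposition together with elementary non-negativity properties of the quantum mutual information, conditional mutual information, and Umegaki relative entropy.
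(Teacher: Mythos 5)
Your proposal is correct and follows exactly the route the paper takes: the paper presents Theorem~\ref{theorem:workbound} as an immediate corollary of Proposition~\ref{prop:general_formula}, obtained ``by discarding terms that are always either positive or negative,'' which is precisely your argument of dropping the non-negative quantities $I(A\!:\!K)_{\rho_3}$, $S_\mathrm{irr}^{B_1}$, $I(A\!:\!M|K)_{\rho_2}$, $S_\mathrm{irr}^{B_2}$ and reading off the saturation conditions from the fact that a sum of non-negative terms vanishes iff each term does. No gaps.
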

\end{framed}

\begin{tech-remark}
Let us discuss, by means of examples, the conditions under which  the bounds in the above theorem can be saturated. A necessary condition for the equality in \eq{eq:A-extractable} is for the entropy production during the feedback step, $S_\mathrm{irr}^{B_1}$, to vanish. This will trivially be achieved if the feedback process is chosen to be  pure unitary, i.e.,  so that for each outcome the target system undergoes an isolated unitary evolution, as assumed in \cite{sagawa-ueda2008second}. However, note that in general this alone will not guarantee the other necessary condition for the equality in \eq{eq:A-extractable}, i.e., a vanishing Holevo information $I(A:K)_{\rho_3}$. Recall that this quantity vanishes if and only if $\rho^A_{3,k} = \rho^A_{3}$ for all $k$, which implies that $\rho^A_{3,k} = \rho^A_{3,k'}$ for all $k,k'$. But if the feedback process is pure unitary, then $\rho^A_{3,k} = F_k^A(\rho^A_{2,k})F_k^{A \dagger}$. Since unitary channels leave the von Neumann entropy invariant, and two states are identical only if their entropies are identical, it clearly follows that a necessary condition for a vanishing Holevo information given a pure unitary feedback process is for all the posterior states after measurement, $\rho^A_{2,k}$, to have the same entropy. While this can be achieved if, for example, the system undergoes a von Neumann measurement of a non-degenerate observable, for general measurement processes this is not  the case. This is why in physically relevant situations, in order to saturate \eq{eq:A-extractable} a feedback process that exchanges entropy with a thermal bath is required, thus going beyond the paradigm of pure unitary feedback processes employed in \cite{sagawa-ueda2008second}.
\end{tech-remark}

\begin{tech-remark}
Similarly as above, a necessary condition for the equality in \eq{eq:lower_demon} is for the entropy production during erasure, $S_\mathrm{irr}^{B_2}$, to vanish. The other necessary condition, however, is given by a vanishing conditional mutual information $I(A\!:\!M|K)_{\rho_2}=\sum_{k\in \cK} p_k \, I(A:M)_{\rho_{2,k}}$. Clearly, such a quantity vanishes if and only if $\rho^{AM}_{2,k} = \rho^A_{2,k} \otimes \rho^M_{2,k}$. Given that $\rho^{AM}_{2,k} = \idch^A \otimes \cM_k[ \cU(\rho^{A}_0 \otimes \rho^M_0)] / p_k$, a sufficient condition for $I(A\!:\!M|K)_{\rho_2}$ to vanish is if the instrument $\cM$ is \emph{nuclear} (also known as \textit{measure-and-prepare}~\cite{horodecki2003entanglement} or \textit{Gordon-Louisell type}~\cite{gordon-louisell}). That is,   if it holds that $\cM_k(\cdot) = \Tr[\cM_k(\cdot)] \varrho ^M_k$ for all $k$, where $\{\varrho ^M_k\}$ is a fixed family of states on $M$. It is clear that   a nuclear instrument acting in $M$ will destroy the correlations between $A$ and $M$ for each outcome $k$. Every POVM admits a nuclear instrument and, as shown in Corollary~1 of \cite{Heinosaari2010} (see also Theorem 2 of \cite{Pellonpaa2013a}), if the pointer observable measured by $\cM$ is rank-1, i.e., if all the effects $\M_k = \cM_k^*(\one^M)$ are proportional to a rank-1 projection, then $\cM$ is necessarily nuclear. Consequently, by choosing a rank-1 pointer observable, we can guarantee that the term $I(A\!:\!M|K)_{\rho_2}$ vanishes.
\end{tech-remark}

\subsection{Comparison between the second law of thermodynamics and the second law of ITh}

Our analysis so far has been independent of thermodynamics, but henceforth we will explore the consequences derived by combining the results of \propref{prop:general_formula} with the second law of thermodynamics.
Before doing so, however, we introduce two types of second laws of thermodynamics in this section, and show how they are related.

According to Ref.~\cite{esposito2011second}, when a thermodynamic system $Y$ transforms as $(\rho^Y_i; H^Y_i) \mapsto (\rho^Y_j; H^Y_j)$ by an isothermal processes, i.e., a process involving thermal baths with the same temperature, the second law can be formulated as the following inequality:
\begin{equation}\label{def:second_law}
    W_{\ext}^Y\le -\Delta F^Y_{i\to j}\;.
\end{equation}
Notice that the nonequilibrium free energy change in the right-hand side can be replaced by the change in \textit{equilibrium} free energy $F_{\mathrm{eq}}( H^Y):=-\beta^{-1}\ln Z^Y \equiv F(\gamma^Y; H^Y)$ whenever the initial state of $Y$ is assumed to be in thermal equilibrium---this is  a consequence of the implication~\cite{esposito2011second}
\begin{equation}\label{eq:eq-noneq}
 \rho^Y_i = \gamma^Y \implies   -\Delta F^Y_{i\to j}\le -\Delta F^Y_{\mathrm{eq},i\to j}.
\end{equation}
The above inequality will be useful when connecting our analysis to previous ones.

The  feedback control and erasure protocol we consider consists of the subsystem $AMK$  interacting with baths $B_1$ and $B_2$, which are assumed to be of the same temperature so that the total process is isothermal. This is to ensure that our analysis falls within the domain of applicability of the second  law as formulated in \eq{def:second_law}. As such, the feedback control and erasure protocol is consistent with the second law of phenomenological thermodynamics, or the \textit{overall} second law,  when the net extracted work given in \eq{eq:net-work-extracted} and the change in free energy of the compound $AMK$ obey the relation in  \eq{def:second_law}, i.e.,  
\begin{equation}\label{eq:clausius-like-feedback}
    W_{\textrm{tot}} \le-\Delta F^{AMK}_{0\to4}\;.
\end{equation}
We remark again that the above inequality embodies the second law of thermodynamics when considered from the beginning (time $t_0$) to the end (time $t_4$) of the protocol, regardless of what happens in the intermediate steps. On the other hand, the feedback control and erasure protocol is consistent with the second law of ITh, as formulated in~\cite{sagawa2009minimal},  when the net extracted work given in \eq{eq:net-work-extracted}  is bounded by the change in free energy of \emph{the target system alone}, i.e., 
\begin{align}\label{eq:new-second-law-IT-2}
    W_{\textrm{tot}} \le -\Delta F^A_{0\to4}\;.
\end{align}

Since the memory and register are erased, the free energy change $\Delta F^{MK}_{0\to4}$ is zero. Naively, one would be led to expect that $\Delta F^{AMK}_{0\to4} = \Delta F^{A}_{0\to4}$ as a result, thus suggesting that  Eqs. \eqref{eq:clausius-like-feedback} and \eqref{eq:new-second-law-IT-2} are equivalent. However, as the following proposition (proved in Methods, Section~\ref{appendix:second-law-equiv}) shows, they coincide \textit{if and only if} erasure is perfect. 

\begin{proposition}\label{prop:second-law-equiv}
The generalized quantum feedback control and erasure protocol (Figure~\ref{fig:process}) is consistent with the overall second law of thermodynamics,  i.e., \eq{eq:clausius-like-feedback}, if and only if
\begin{align}\label{eq:overall-second-law-equiv}
    \Delta S^{AMK}_{0\to2} &\geqslant   I(A:MK)_{\rho_4} - I(A:M|K)_{\rho_2} -I(A:K)_{\rho_3} -S_\mathrm{irr}^{B_1} - S_\mathrm{irr}^{B_2}. 
\end{align}
Instead, the protocol is consistent with the generalized second law of ITh, i.e., \eq{eq:new-second-law-IT-2}, if and only if 
\begin{align}\label{eq:second-law-ITh-equiv}
    \Delta S^{AMK}_{0\to2} &\geqslant    - I(A:M|K)_{\rho_2} -I(A:K)_{\rho_3} -S_\mathrm{irr}^{B_1} - S_\mathrm{irr}^{B_2}. 
\end{align}
Since $I(A:MK)_{\rho_4} \geqslant 0$, \eq{eq:overall-second-law-equiv} always implies \eq{eq:second-law-ITh-equiv}: consistency with the second law of thermodynamics implies consistency with the second law of ITh. The converse implication holds if and only if erasure is perfect, i.e., $\rho^{AMK}_4 = \rho^A_4 \otimes \rho^M_0 \otimes |0\>\<0|^K$.
\unskip\nobreak\hfill $\square$
\end{proposition}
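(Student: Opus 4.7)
The plan is to reduce everything to straightforward algebra on top of Proposition~\ref{prop:general_formula}. First, I substitute the closed-form expressions for $W^A_\ext$ and $W^{MK}_\In$ from \eq{eq:extracable_holevo-go} and \eq{eq:work_in_go} into $W_{\textrm{tot}} = W^A_\ext - W^{MK}_\In$. The two $I_{\GO}$ terms cancel, leaving the single identity
\begin{equation*}
W_{\textrm{tot}} = -\Delta F^A_{0\to4} - \beta^{-1}\Delta S^{AMK}_{0\to2} - \beta^{-1}\left[I(A\!:\!K)_{\rho_3} + I(A\!:\!M|K)_{\rho_2} + S_\mathrm{irr}^{B_1} + S_\mathrm{irr}^{B_2}\right]\;.
\end{equation*}
Inserting this into $W_{\textrm{tot}} \le -\Delta F^A_{0\to4}$ and rearranging immediately yields \eq{eq:second-law-ITh-equiv}, establishing the stated equivalence for the second law of ITh.

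For the overall second law, the key intermediate step is to express $\Delta F^{AMK}_{0\to4}$ in terms of $\Delta F^A_{0\to4}$. Since the framework assumes the non-interacting Hamiltonian $H = H^{B_1} + H^A + H^{MK} + H^{B_2}$, the internal energy decomposes as $E^{AMK} = E^A + E^{MK}$; the defining property of erasure $\rho^{MK}_4 = \rho^{MK}_0$ together with $H^{MK}$ being held constant gives $\Delta E^{MK}_{0\to 4} = 0$, hence $\Delta E^{AMK}_{0\to 4} = \Delta E^A_{0\to 4}$. For the entropy, the initial state $\rho_0^{AMK} = \rho^A_0 \otimes \rho^M_0 \otimes \ket{0}\!\!\bra{0}^K$ is a product, while at $t_4$ the marginal on $MK$ is unchanged but $A$ and $MK$ may be correlated. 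Combining $S(AMK)_{\rho_0} = S(A)_{\rho_0} + S(MK)_{\rho_0}$ with $S(AMK)_{\rho_4} = S(A)_{\rho_4} + S(MK)_{\rho_4} - I(A\!:\!MK)_{\rho_4}$ and $S(MK)_{\rho_4} = S(MK)_{\rho_0}$ then yields $\Delta S^{AMK}_{0\to4} = \Delta S^A_{0\to4} - I(A\!:\!MK)_{\rho_4}$, so that
\begin{equation*}
\Delta F^{AMK}_{0\to4} = \Delta F^A_{0\to4} + \beta^{-1} I(A\!:\!MK)_{\rho_4}\;.
\end{equation*}
Substituting this into $W_{\textrm{tot}} \le -\Delta F^{AMK}_{0\to4}$ and rearranging delivers \eq{eq:overall-second-law-equiv}.

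With both equivalences in hand, the comparison is transparent: \eq{eq:overall-second-law-equiv} and \eq{eq:second-law-ITh-equiv} differ only by the non-negative term $I(A\!:\!MK)_{\rho_4}$ on the right-hand side, which immediately gives the forward implication. For the converse, the gap between the two inequalities closes precisely when $I(A\!:\!MK)_{\rho_4} = 0$; by the equality case of mutual information this is equivalent to $\rho^{AMK}_4 = \rho^A_4 \otimes \rho^{MK}_4$, which combined with the erasure condition $\rho^{MK}_4 = \rho^M_0 \otimes \ket{0}\!\!\bra{0}^K$ recovers exactly the definition of perfect erasure given in the framework. The main obstacle is not really technical but notational: I have to carefully track which subsystems are dormant at each step, which Hamiltonians are held fixed, and which marginals appear where, so that the cancellation of $I_{\GO}$ and the product-state reasoning for $\Delta S^{AMK}_{0\to 4}$ are both fully justified; once the explicit formula for $W_{\textrm{tot}}$ and the identity for $\Delta F^{AMK}_{0\to4}$ are in place, the rest is essentially bookkeeping.
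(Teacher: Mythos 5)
Your proposal is correct and follows essentially the same route as the paper: combine the two equalities of Proposition~\ref{prop:general_formula} into an explicit formula for $W_{\textrm{tot}}$, relate $\Delta F^{AMK}_{0\to4}$ to $\Delta F^{A}_{0\to4}$ via the erasure condition and the mutual-information correction $\beta^{-1}I(A\!:\!MK)_{\rho_4}$, and then rearrange. The only cosmetic difference is that you derive the free-energy decomposition by splitting internal energy and entropy by hand, whereas the paper invokes its Lemma~\ref{lemma:free_energy_mutual} directly --- the computation is identical.
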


In summary: a feedback control and erasure protocol that is consistent with the second law of phenomenological thermodynamics is guaranteed to also be consistent with the second law of ITh. However, if erasure is partial so that $I(A:MK)_{\rho_4} >0$, then it may be the case that the protocol is consistent with the second law of ITh, but violates the second law of thermodynamics proper, allowing for  work extraction beyond the Clausius bound.

\subsection{When is a quantum measurement process compatible with the second law?}

\propref{prop:second-law-equiv} above provides necessary and sufficient conditions for a given feedback control and erasure protocol to be consistent with the second law---be it the overall second law, or the second law of ITh. But now recall that a feedback control and erasure protocol is implemented by \textit{first} performing a measurement, and \textit{subsequently} performing feedback and erasure. It follows that, in order for a particular measurement process itself to be consistent with the second law(s), then all possible feedback control and erasure protocols that utilize that same measurement process must be consistent with the second law(s). This leads us to the following definition:

\begin{definition}
A given quantum measurement process 
   \begin{align*}
       \rho^{A}_0 \otimes \rho^M_0 \otimes |0\>\<0|^K \mapsto \sum_{k\in\cK}(\mathrm{id}^A\otimes\cM_k) \Big[\cU (\rho^{A}_0 \otimes \rho^M_0)\Big]\otimes\ket{k}\!\!\bra{k}^K
   \end{align*}
 is \textit{compatible with the overall second law of thermodynamic} whenever \eq{eq:overall-second-law-equiv}  holds for all possible subsequent isothermal feedback and erasure processes. Similarly, the measurement process is \textit{compatible with the second law of ITh} whenever \eq{eq:second-law-ITh-equiv} holds for all possible subsequent feedback and erasure processes.
  \unskip\nobreak\hfill $\square$
\end{definition}

We shall begin from a sufficient condition for a given measurement process to be compatible with the second law(s). As explicitly shown in Methods, Section~\ref{appendix:second-law-sufficient}, we observe that the right hand side of \eq{eq:overall-second-law-equiv} in \propref{prop:second-law-equiv} is never strictly positive, allowing us to  obtain the following:
\begin{proposition}
\label{prop:second-law-sufficient}
A   measurement process that does not decrease the total entropy, i.e., such that $\Delta S^{AMK}_{0\to2}\ge 0$, is guaranteed to be compatible with the overall second law and, hence, also with the second law of ITh. Moreover, a sufficient condition for $\Delta S^{AMK}_{0\to2}\ge 0$ to hold is if the instrument $\cM$ responsible for pointer objectification implements a bistochastic channel, i.e., a CP linear map that preserves both the trace and the unit.  
\unskip\nobreak\hfill $\square$
\end{proposition}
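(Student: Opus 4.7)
The plan is to handle the two claims of the proposition separately. For the first claim, I would invoke \propref{prop:second-law-equiv}, which tells us that compatibility with the overall second law is equivalent to
\begin{align*}
\Delta S^{AMK}_{0\to2} \geqslant I(A\!:\!MK)_{\rho_4} - I(A\!:\!M|K)_{\rho_2} - I(A\!:\!K)_{\rho_3} - S_\mathrm{irr}^{B_1} - S_\mathrm{irr}^{B_2}\,.
\end{align*}
Hence it suffices to show that the right-hand side is always non-positive, uniformly over all admissible feedback and erasure channels, since then the assumption $\Delta S^{AMK}_{0\to2} \geqslant 0$ immediately implies the desired inequality. Compatibility with the second law of ITh then follows by the implication recorded in \propref{prop:second-law-equiv}.

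The key step is to bound $I(A\!:\!MK)_{\rho_4}$ from above by $I(A\!:\!K)_{\rho_3} + I(A\!:\!M|K)_{\rho_2}$. For this I would first use that the erasure unitary $\cV$ acts only on $MKB_2$, so by the invariance of mutual information under local unitaries, $I(A\!:\!MKB_2)_{\rho_4} = I(A\!:\!MKB_2)_{\rho_3}$; since at $t_3$ the bath $B_2$ is still in its thermal state and uncorrelated with everything else, the latter equals $I(A\!:\!MK)_{\rho_3}$, and monotonicity of mutual information under tracing out $B_2$ gives $I(A\!:\!MK)_{\rho_4} \leqslant I(A\!:\!MK)_{\rho_3}$. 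Applying the chain rule $I(A\!:\!MK)_{\rho_3} = I(A\!:\!K)_{\rho_3} + I(A\!:\!M|K)_{\rho_3}$ then reduces matters to bounding the last term. Conditionally on each outcome $k$, the feedback $\cF_k$ is a unitary on $B_1A$ that leaves $M$ untouched; tracing out $B_1$ yields a CPTP map on $A$ alone, so the data-processing inequality gives $I(A\!:\!M)_{\rho_{3,k}} \leqslant I(A\!:\!M)_{\rho_{2,k}}$ for every $k$, and averaging (using the classical structure of $K$) yields $I(A\!:\!M|K)_{\rho_3} \leqslant I(A\!:\!M|K)_{\rho_2}$. Putting these together, the right-hand side of \eq{eq:overall-second-law-equiv} is at most $-S_\mathrm{irr}^{B_1} - S_\mathrm{irr}^{B_2} \leqslant 0$, completing the first claim.

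For the second claim, I would decompose the evolution from $t_0$ to $t_2$ into the pre-measurement and the readout. Unitarity of $\cU$ gives $S(\rho_1^{AM}) = S(\rho_0^A) + S(\rho_0^M) = S(\rho_0^{AMK})$, since $K$ starts in a pure state. After readout, $\rho_2^{AMK}$ is classical on $K$, with entropy
\begin{align*}
S(\rho_2^{AMK}) = H(\{p_k\}) + \sum_{k\in\cK} p_k\, S(\rho_{2,k}^{AM})\,,
\end{align*}
and the Holevo inequality $\sum_k p_k S(\rho_{2,k}^{AM}) \geqslant S\bigl(\sum_k p_k \rho_{2,k}^{AM}\bigr) - H(\{p_k\})$ applied to the ensemble $\{p_k,\rho_{2,k}^{AM}\}$ yields $S(\rho_2^{AMK}) \geqslant S\bigl((\idch^A \otimes \cM_\cK)(\rho_1^{AM})\bigr)$. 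If $\cM_\cK$ is bistochastic, so is $\idch^A \otimes \cM_\cK$; since unital channels do not decrease the von Neumann entropy (as follows from the monotonicity of $D(\cdot\|\one/d)$ against the maximally mixed state), the right-hand side is at least $S(\rho_1^{AM})$, and therefore $\Delta S^{AMK}_{0\to 2} \geqslant 0$.

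The main obstacle, I expect, is the careful tracking of the tensor-product structure across the intermediate states $\rho_0,\dots,\rho_4$: specifically, confirming that $B_2$ remains uncorrelated with all other subsystems up to $t_3$, that the feedback truly acts trivially on $M$ and $B_2$ (so the conditional data-processing step is valid), and that the classical character of $K$ lets one rewrite the conditional mutual information as a convex combination. None of these points is deep in isolation, but they are the delicate bookkeeping steps on which the bound on $I(A\!:\!MK)_{\rho_4}$ ultimately rests.
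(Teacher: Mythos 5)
Your proposal is correct and follows essentially the same route as the paper's proof: the first claim is handled by bounding $I(A\!:\!MK)_{\rho_4}$ via data processing through the erasure channel, the chain rule, and the outcome-wise data-processing inequality for the feedback channels, and the second claim by combining the classical structure of $K$ with the fact that bistochastic channels do not decrease entropy (argued via relative entropy to a fixed point, as in the paper). The only differences are cosmetic (ensemble entropy formula in place of $S(K|AM)_{\rho_2}\geqslant 0$, and $\one/d$ in place of $\one$ as the reference state).
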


A consequence of \propref{prop:second-law-sufficient} is that a feedback control and erasure protocol may violate the second law(s) only if it includes a measurement process that decreases the total entropy.
However,  it does not follow that \emph{any} measuring process that decreases the entropy will \textit{always} violate the second law(s). To this end, we obtain the following necessary condition for a measurement process to be compatible with the second law(s), proven in Methods, Section~\ref{appendix:second-law-necessary}:

\begin{framed}
\begin{theorem}\label{theorem:second-law-necessary}
The  measurement process is compatible with the  second law of ITh if and only if
\begin{align}\label{eq:second-law-ITh-necessary-1}
 \Delta S^{AMK}_{0\to2} \geqslant - I(A:M|K)_{\rho_2},   
\end{align}
or, equivalently, 
\begin{align}\label{eq:second-law-ITh-necessary-2}
    \mathscr{H}(\{ p_k\}) \geqslant I_{\GO} + J_{\GO},
\end{align}
where $\mathscr{H}(\{p_k\}) := -\sum_{k\in \cK} p_k \ln p_k$ is the Shannon entropy of the measurement outcomes probability distribution, and $J_{\GO} := S(M)_{\rho_0} - S(M|K)_{\rho_2}$ is the Groenewold--Ozawa information gain of the memory.

Moreover, the above inequalities are necessary conditions for the measurement process to be compatible with the overall second law.
\unskip\nobreak\hfill $\square$
\end{theorem}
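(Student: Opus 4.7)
The plan is to derive everything from \propref{prop:second-law-equiv}. I would first dispatch the easier pieces: (i) the algebraic equivalence of \eqref{eq:second-law-ITh-necessary-1} and \eqref{eq:second-law-ITh-necessary-2}, (ii) the ``if'' direction of the main iff, and (iii) the reduction of the overall-second-law statement to the ITh statement. For (i), the classical-quantum form $\rho_2^{AMK} = \sum_k p_k\, \rho_{2,k}^{AM} \otimes |k\>\<k|^K$ together with $\rho_0^{AMK} = \rho_0^A \otimes \rho_0^M \otimes |0\>\<0|^K$ allows a direct expansion of $\Delta S^{AMK}_{0\to2}$ and of $I(A:M|K)_{\rho_2} = \sum_k p_k\, I(A:M)_{\rho_{2,k}}$; their sum telescopes to $\mathscr{H}(\{p_k\}) - I_{\GO} - J_{\GO}$ after the $S(\rho_{2,k}^{AM})$ contributions cancel and one recognises the defining expressions of $I_{\GO}$ and $J_{\GO}$. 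For (ii), under \eqref{eq:second-law-ITh-necessary-1} the three non-negative corrections $I(A:K)_{\rho_3}, S_\mathrm{irr}^{B_1}, S_\mathrm{irr}^{B_2}$ in \eqref{eq:second-law-ITh-equiv} can only weaken its RHS, so \eqref{eq:second-law-ITh-equiv} holds for every $(\cF, \cV)$. For (iii), \eqref{eq:overall-second-law-equiv} differs from \eqref{eq:second-law-ITh-equiv} only by the non-negative additional term $+I(A:MK)_{\rho_4}$ on its RHS, so overall-compatibility implies ITh-compatibility for every $(\cF,\cV)$, and the main iff then delivers \eqref{eq:second-law-ITh-necessary-1} as a necessary condition.

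The substantive step is the ``only if'' direction of the main iff. The strategy is to construct, for any $\varepsilon>0$, a specific pair $(\cF^*, \cV^*)$ for which $I(A:K)_{\rho_3}+S_\mathrm{irr}^{B_1}+S_\mathrm{irr}^{B_2}<\varepsilon$; applying \eqref{eq:second-law-ITh-equiv} for that $(\cF^*,\cV^*)$ then yields $\Delta S^{AMK}_{0\to2}+I(A:M|K)_{\rho_2}\ge -\varepsilon$, and sending $\varepsilon\to 0$ delivers \eqref{eq:second-law-ITh-necessary-1}. Concretely, I would choose a feedback that, using $B_1$ as a reservoir, quasi-statically transports each conditional post-measurement state $\rho_{2,k}^A$ to a common state $\sigma^A$ while returning $B_1$ to thermal equilibrium — this simultaneously suppresses $I(A:K)_{\rho_3}$ (since the final state of $A$ becomes independent of $k$) and $S_\mathrm{irr}^{B_1}$ (since the transport is reversible on the bath) — paired with a Landauer-reversible unitary erasure on $MKB_2$ that resets $MK$ to $\rho_0^M\otimes|0\>\<0|^K$ without leaving $B_2$ out of thermal equilibrium. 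The main obstacle is that any unitary on a finite-dimensional bath preserves spectra, so exact simultaneous vanishing of the three corrections is generically blocked by a mismatch between the spectra of the conditional states $\rho_{2,k}^A$ and of $\rho_3^{MK}$ versus the targets; the standard workaround is to take the baths $B_1,B_2$ arbitrarily large and implement a reversible quasi-static protocol, in which each non-negative correction is pushed below the prescribed $\varepsilon$, and the desired bound survives in the limit.
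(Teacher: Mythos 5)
Your proposal is correct and follows essentially the same route as the paper: reduce to \propref{prop:second-law-equiv}, observe that the right-hand side of \eq{eq:second-law-ITh-equiv} is maximal exactly when $I(A\!:\!K)_{\rho_3}$, $S_\mathrm{irr}^{B_1}$, $S_\mathrm{irr}^{B_2}$ all vanish, expand the entropies of the classical--quantum states to obtain the $I_{\GO}+J_{\GO}$ form, and use $I(A\!:\!MK)_{\rho_4}\geqslant 0$ for the overall-second-law claim. Your $\varepsilon$-approximation treatment of achievability (arbitrarily large baths, quasi-static feedback driving all conditional states to a common $\sigma^A$, spectrum-mismatch obstacle) is in fact more careful than the paper's one-line assertion that the bound ``is achievable in the limit where feedback and erasure are quasistatic.''
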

\end{framed}

\eq{eq:second-law-ITh-necessary-1} states that even if the measurement process decreases the entropy, as long as the target system and memory are left in a sufficiently correlated state, then all possible feedback and erasure processes built on it will still be consistent with the second law of ITh. Such a condition is equivalently reformulated in \eq{eq:second-law-ITh-necessary-2} as a \textit{tradeoff} between the information gains of the target system and the memory: if a given measurement process is compatible with the second law of ITh, then the information gain of the target system and that of the memory cannot be both arbitrarily large at the same time, but their sum must remain below the Shannon entropy of the measurement outcomes distribution.

Note that, in Theorem~\ref{theorem:second-law-necessary}, it is the entropy of the \textit{compound} $AMK$ that matters, not the entropy of the system $A$ alone, which may well decrease as a result of the action of the effective instrument $\{\mathcal{A}_k:k\in\mathcal{K}\}$ in Eq.~\eqref{eq:system-instrument}. In other words, the second law puts a restriction \textit{on how a particular instrument is realized on the compound}, not on the instrument itself.

\begin{tech-remark}
As stated in \propref{prop:second-law-sufficient},  if the instrument responsible for pointer objectification implements a bistochastic channel,  then the entropy of the compound $AMK$ is guaranteed not to decrease~\cite{Alberti1982}, thereby ensuring compatibility with the second law. A paradigmatic example of an objectification process that satisfies this condition is given by the L\"uders instrument.  But for any pointer observable $\M$ acting in the memory, there are $\M$-compatible instruments which do not implement a bistochastic channel---for example, a nuclear instrument which prepares the memory in the same pure state for all outcomes. Additionally, let us recall that such an instrument will always destroy the correlations between system and memory, so that $I(A:M|K)_{\rho_2} = 0$, whereby a decrease in entropy is sufficient for the violation of the second law for some feedback and erasure process. In  Methods, Section~\ref{appendix:bistochastic}, we explicitly construct such a feedback control and erasure protocol so that $\Delta S^{AMK}_{0\to2}$ is strictly negative, and which violates both the overall second law of thermodynamics, as well as the second law of ITh. 
\end{tech-remark}

As a consequence of the above, we see that the choice of the measurement process, in particular, of the objectification process, while not affecting the dynamics of the target system alone---which depends only on the pointer observable $\M$, not on the choice of $\cM$ implementing it, see \eq{eq:system-instrument}---instead has a \textit{non-trivial thermodynamic implication}, since the state change of the memory enters directly into the thermodynamic balance. In fact, the common assumption that the pointer objectification is implemented by a L\"uders instrument~\cite{sagawa2009minimal,sagawa2009erratum, Jacobs2009,funo2013integral,abdelkhalek2016fundamental,strasberg17,Mohammady2019c, Strasberg2019, Strasberg2020b,  strasberg2022quantum, Latune2024} obscures the role that the bistochasticity of such instruments plays in ensuring consistency with the second law, leading to the erroneous conclusion that the laws of quantum theory alone are sufficient to ensure compatibility with the second law. Here instead we have shown that, in order to obtain a full understanding of how the pointer objectification relates to the second law, the instrument $\cM$ \textit{must be treated as arbitrary}, as we have done, \textit{lest one obtain statements of limited scope}.

\section{Discussion}

Here, we compare the work inequalities presented in Theorem~\ref{theorem:workbound} with those previously obtained by Sagawa and Ueda~\cite{sagawa-ueda2008second,sagawa2009minimal,sagawa2009erratum}. According to \cite{sagawa-ueda2008second}, the achievable upper bound on the amount of work extracted by feedback control from the target system $A$, assumed to be initially in equilibrium, is
\begin{equation}\label{eq:sagawa-ueda-upper}
    W^A_\ext\le-\Delta F^A_{\Eq,\;0\to4}+\beta^{-1}I_{\QC}\;,
\end{equation}
where $I_{\QC}$ is a nonnegative quantity named the \textit{QC-mutual information}~\cite{sagawa-ueda2008second}. This quantity, in some particular situations, can be interpreted as a measure of the information gained by the measurement performed by the controller on the target system. Thus
Eq.~\eqref{eq:sagawa-ueda-upper} implies that the second law~\eqref{def:second_law} for system $A$ can be violated in a feedback control protocol by an amount that is directly proportional to the information that the controller is able to obtain about the target system. Then, in a subsequent paper~\cite{sagawa2009minimal}, the same authors showed that the quantity $\beta^{-1}I_{\QC}$, under suitable assumptions, provides a tight  lower bound on the work cost for  measurement and  erasure:
\begin{equation}\label{eq:sagawa-ueda-lower}
    W^{MK}_\meas+W^{MK}_\eras\equiv W^{MK}_\In\ge \beta^{-1}I_{\QC} \;.
\end{equation}
Recalling that $ W_{\textrm{tot}} = W^A_\ext-W^{MK}_\In$,  one thus obtains
\begin{equation}
  W_{\textrm{tot}} \le-\Delta F^A_{\Eq,\;0\to4}\;,\label{eq:sagawa-ueda-2ndlaw}
\end{equation}
which Ref.~\cite{sagawa2009minimal} refers to as the second law of ITh.

However, in order to be valid, the analysis presented by Sagawa and Ueda in~\cite{sagawa-ueda2008second,sagawa2009minimal,sagawa2009erratum} requires the following assumptions on the quantum feedback control and erasure protocol:

\begin{description}
    \item[\textbf{Assumption 1 (A-1)}~\cite{sagawa2009minimal}] The pointer objectification must be implemented by a L\"uders instrument $\cM^L_k(\cdot) := \M_k (\cdot) \M_k$  compatible with a projection valued measure $\M$ acting in $M$. That is,  for each measurement outcome $k$, it must hold that
        \begin{equation*}
            \rho^{AM}_{2,k}=\frac{(\mathds{1}^A\otimes \M_k) \cU(\rho^A_0\otimes\rho^M_0) (\mathds{1}^A\otimes \M_k)}{p_k}\;.
        \end{equation*}

    \item[\textbf{Assumption 2 (A-2)}~\cite{sagawa-ueda2008second,sagawa2009minimal,sagawa2009erratum}] The instrument acting in the target system $A$, i.e., $\cA_k(\cdot) :=\Tr_M\big\{(\mathrm{id}^A\otimes\cM_k) \big[\cU(\cdot\otimes\rho^M_0)\big]\big\}$, must be \emph{efficient}. That is, every operation $\cA_k$ must be expressible with only one Kraus operator. 

    \item[\textbf{Assumption 3 (A-3)}~\cite{sagawa-ueda2008second}] The target system $A$ must be initially prepared in the Gibbs state, that is, $\rho^A_0=\gamma^A$.

    \item[\textbf{Assumption 4 (A-4)}~\cite{sagawa2009minimal}] At time step $t=t_2$, the target system and memory must be in a product state for each outcome $k$, i.e., $\rho^{AM}_{2,k} = \rho^A_{2,k}\otimes\rho^M_{2,k}$.

    \item[\textbf{Assumption 5 (A-5)}~\cite{sagawa-ueda2008second}] The feedback process must be pure unitary. That is, for each outcome $k$ it must hold that $\rho^A_{3,k} = F^A_k (\rho^A_{2,k}) F^{A\dagger}_k$.

    \item[\textbf{Assumption 6 (A-6)}~\cite{sagawa2009minimal}] The memory's Hilbert space and Hamiltonian possess a  direct sum structure, i.e., $\cH^M = \bigoplus_{k=0}^N \cH^{M_k}$ and   $H^M = \bigoplus_{k=0}^N H^{M_k}$, where $N = |\cK|$ is the number of measurement outcomes, and $H^{M_k}$ are Hamiltonians on the sector $\cH^{M_k}$. Denoting the Gibbs states for each sector $\cH^{M_k}$ as $\gamma^{M_k}$, it must hold that: (i) the initial state of the memory satisfies $\rho^M_0 = \gamma^{M_0}$, and (ii)  the conditional states of the memory \textit{before} erasure are thermal in the respective sectors, i.e., $\rho^M_{3,k} =\gamma^{M_k}$.

\end{description}

Note that none of the above assumptions need be satisfied by a general measurement and feedback process like that we consider. In fact, they are generally incompatible, except in trivial cases, as we discuss in the following remark.

\begin{tech-remark}
First, assumptions (A-1) and (A-4) are typically incompatible, since given a L\"uders-type pointer objectification,  the post-measurement states $\rho^{AM}_{2,k}$ will in general be correlated. There are two cases in which  (A-4) will be guaranteed to hold given (A-1): (i) if   $\M_k$ are rank-1 projections, which is both necessary and sufficient for the $\M$-compatible L\"uders instrument $\cM^L$ to be nuclear, then measurement of $M$ by $\cM^L$ is guaranteed to destroy the correlations between $A$ and $M$; (ii) if the premeasurement unitary channel is local, i.e., $\cU = \cU^A \otimes \cU^M$, then it trivially holds that $\rho^{AM}_{2,k} = p_k^{-1} \cU^A(\rho^A_0) \otimes \M_k \cU^M(\rho^M_0) \M_k$.  But in such a case the measurement process does not extract any information at all, as it implements a trivial observable in $A$, namely, a POVM whose elements are all proportional to $\one^A$. Second, whenever the elements of the POVM measured by the instrument $\cA$ in the target system are linearly independent (for example, if the observable is projection valued) then (A-1), (A-2), and (A-6) are compatible  only if 
$\dim(\cH^{M_0}) \leqslant N^{-1}\sum_{k=1}^N \dim(\cH^{M_k})$.
This follows from the fact that Gibbs states have full rank, and so the rank of $\rho^M_0 = \gamma^{M_0}$ equals $\dim(\cH^{M_0})$, together with the fact that an efficient instrument compatible with an observable with linearly independent effects is \emph{extremal} \cite{DAriano2011,Pellonpaa2013}.  See  Methods, Section~\ref{appendix:efficient-instrument-rank}, for the proof. In particular, since $\M_k$ are projections onto the subspaces $\cH^{M_k}$, then if $\M_k$ are rank-1 projections, which is necessary to guarantee compatibility of (A-1) and (A-4) discussed above, then $\cH^{M_0}$ must also be 1-dimensional. In other words, in order to guarantee compatibility between assumptions (A-1), (A-2), (A-4), and (A-6), the initial state of the memory, $\rho^M_0$,  must be pure. This is a physically unrealistic assumption due to the third law of thermodynamics \cite{Taranto2021}. 
\end{tech-remark}

On the other hand, as a consequence of our analysis, one easily sees that in fact Assumption~(A-1) alone is already sufficient to obtain Eq.~\eqref{eq:new-second-law-IT-2} which, under Assumption (A-3) and \eq{eq:eq-noneq}, directly implies Eq.~\eqref{eq:sagawa-ueda-2ndlaw}.
This is because  L\"uders channels  are bistochastic, so that  by \propref{prop:second-law-sufficient}  $\Delta S^{AMK}_{0\to2}\ge 0$ is guaranteed to hold, which implies consistency with both second laws. 

Thus, Eqs.~\eqref{eq:A-extractable} and \eqref{eq:lower_demon} constitute a strict extension of Sagawa and Ueda's relations~\eqref{eq:sagawa-ueda-upper} and~\eqref{eq:sagawa-ueda-lower}.
This is because:
\begin{enumerate}
\setlength{\itemsep}{-2mm}
    \item When the pointer objectification is implemented by a projective measurement on the memory, i.e., under (A-1), it holds that $\Delta S^{AMK}_{0\to2}\ge 0$. Moreover, if $\Delta S^{AMK}_{0\to2}>0$, Eq.~\eqref{eq:lower_demon} is a more refined inequality than Eq.~\eqref{eq:sagawa-ueda-lower}, and while the former can be saturated, the latter cannot.
  
    \item When the instrument acting in $A$ is assumed to be efficient, i.e., under (A-2), then the Groenewold--Ozawa information gain $I_{\operatorname{GO}}$ coincides with the QC-mutual information $I_{\operatorname{QC}}$, as shown in Ref.~\cite{buscemi2008global}; in all other situations, the two quantities are unrelated, i.e., $I_{\GO}\gtreqqless I_{\QC}$, but the one that retains its role in thermodynamic relations is $I_{\GO}$.

    \item When the target system is initialized in a Gibbs state, i.e., under (A-3), then $-\Delta F^A_{0\to4}\le -\Delta F^A_{\Eq,0\to4}$ because of Eq.~\eqref{eq:eq-noneq}.
\end{enumerate}
In particular, we conclude that the correct information measure that remains valid for general measurement processes is $I_{\GO}$, not $I_{\QC}$.
Although  $I_{\GO}$ has been considered also in some previous works~\cite{Jacobs2009, funo2013integral,abdelkhalek2016fundamental}, these still imposed assumption (A-1). Our analysis shows that $I_{\GO}$ is the right quantity to consider even when (A-1) is not satisfied.

Summarizing, in this paper, we have shown that  the consistency between the second law of thermodynamics and information processing is not guaranteed by the laws of quantum theory \emph{simpliciter}. Instead, the second law must be taken as a primitive principle which imposes constraints on the physically valid quantum information processing protocols. In order to precisely characterize such constraints, we formulated quantum feedback control and erasure protocols with general isothermal feedback  and general measurement processes. In particular, we did not assume that the  pointer objectification step of the measurement process is implemented by a L\"uders instrument, as was done in previous studies. We then provided  necessary and sufficient conditions for such protocols to be consistent with the second law  (\propref{prop:second-law-equiv}). More generally, we provided necessary and sufficient conditions for a given measurement process to be consistent with the second laws for all subsequent feedback control and erasure processes (\propref{prop:second-law-sufficient} and \thmref{theorem:second-law-necessary}). These results show that while the second law is necessarily obeyed if the pointer objectification process is bistochastic---as is the case for L\"uders instruments---the second law can be violated if the pointer objectification decreases the entropy, which is permitted by quantum theory alone. In this very sense, then, quantum theory alone is not a guarantee of compatibility with the second law.

Along the way, we derived expressions for the work extracted by feedback control and the work required for measurement and erasure (\propref{prop:general_formula} and \thmref{theorem:workbound}) which, 
unlike those presented in previous studies~\cite{sagawa-ueda2008second,sagawa2009minimal,sagawa2009erratum, Jacobs2009,funo2013integral,abdelkhalek2016fundamental,Mohammady2019c, Strasberg2019, Strasberg2020b},  are universally valid in the sense that we did not impose any assumptions on the feedback process, the measurement (including the pointer readout), or the initial state of the system. Of course, our equations recover those presented in previous studies~\cite{sagawa-ueda2008second,sagawa2009minimal,sagawa2009erratum}, but are able to do  so with fewer assumptions. As our other main result, we then show that the generalized second law of ITh presented here is guaranteed to hold  for any quantum feedback control and erasure protocol that is consistent with the second law of  thermodynamics proper, and that the two laws become equivalent  in the case of perfect erasure of the demon's memory (\propref{prop:second-law-equiv}).

This resolves the problem of the scope of the second law of ITh, which was unclear from previous studies, but can now be considered a \textit{universally valid law of physics}. That is to say, since the conjunction of the second law and the laws of quantum theory implies that the second law of ITh will hold by logical necessity, as long as the second law and quantum theory are regarded as universally valid laws of physics, then so too must the second law of ITh be.
Our results also contribute to the debate regarding the operational interpretation of the Groenewold--Ozawa information gain, which has been generally considered problematic, especially in those situations where it takes negative values; we have seen that this quantifies the amount by which the extractable work by measurement-plus-feedback exceeds the reduction in free energy \cite{Jacobs2009,Strasberg2019}, for all possible measurement and feedback processes.

An interesting direction to follow will be to look for applications of our approach to other formulations of the second law such as fluctuation theorems~\cite{sagawa2012fluctuation,sagawa2012nonequilibrium,sagawa2013role,funo2013integral,buscemi-scarani-2021fluctuation,aw-buscemi-scarani}.
In the same way, another possible line for future research is to bring our analysis to the one-shot case~\cite{horodecki2013fundamental,faist2018fundamental,Lipka-Bartosik2018}, possibly beyond quantum theory~\cite{hanggi2013violation,krumm2017thermodynamics,minagawa-2022}, and to introduce insights from the thermodynamic reverse bound~\cite{buscemi2020thermodynamic}, retrodiction~\cite{buscemi-scarani-2021fluctuation,aw-buscemi-scarani,Buscemi-Safranek-Schindler-OE-NJP-2023,bai2023observational} and the theory of approximate recoverability~\cite{buscemi-das-wilde}. Finally, an interesting line of future investigation will be to see how the second law of ITh interplays with the first and third laws of thermodynamics: the first law demands that the interaction between system and memory of the measuring device must be constrained so as to conserve the total energy, whereby the Wigner--Araki--Yanase theorem will impose limitations on the measurements one may perform \cite{Wigner:1952aa,Araki1960,Ozawa2002,tajima2019coherence,Mohammady2021a,Kuramochi2022,emori2023error}. On the other hand, the third law will prohibit the memory from being initialized in a pure state, which has also been shown to impose fundamental constraints on measurements \cite{Guryanova2018,Mohammady2022a, Mohammady2024}. While we have seen that the second law alone imposes no constraints on the measurements we can make on the target system---any instrument acting in the target system allows for a bistochastic measurement process that does not reduce the total entropy of the compound---it may be the case that, in conjunction with the other laws of thermodynamics, further constraints must be imposed on the quantum measurements that can be performed.

\section{Methods}

\subsection{Preliminaries}\label{app:preliminaries}

Here, we introduce some preliminary concepts  which will be used in the technical proofs appearing throughout the rest of the manuscript.

\begin{definition}\label{defn:free-energy}
 Consider a thermodynamic system $(\rho^A; H^A)$. The \emph{internal energy} is defined as
 \begin{align*}
 E(\rho^A;H^A):=\Tr[\rho^A H^A],    
 \end{align*}
 and the \emph{nonequilibrium free energy}~\cite{gaveau1997general,esposito2011second}  is defined as 
    \begin{equation*} 
    F(\rho^A; H^A):= E(\rho^A; H^A)-\beta^{-1}S(A)_\rho \equiv F_{\mathrm{eq}}( H^A)+\beta^{-1}D(\rho^A\|\gamma^A)\;,
    \end{equation*}
    where  $F_{\mathrm{eq}}( H^A):=-\beta^{-1}\ln Z^A \equiv F(\gamma^A; H^A)$ is the equilibrium (Helmholtz) free energy.
    \unskip\nobreak\hfill $\square$
\end{definition}

\begin{lemma}\label{lemma:free_energy_mutual}
Consider a bipartite thermodynamic system $(\rho^{AB}; H^{AB})$. Assume that the Hamiltonian is additive, i.e., $ H^{AB} = H^A + H^{B}:= H^A\otimes\mathds{1}^{B}+\mathds{1}^A\otimes H^{B}$. It holds that
\begin{align*}
 E(\rho^{A B}; H^{AB})= E(\rho^A; H^A)+E(\rho^{B};H^{B})   
\end{align*}
and
\begin{equation*}
F(\rho^{AB}; H^{AB})=F(\rho^A; H^A)+F(\rho^{B}; H^{B})+\beta^{-1}I(A\!:\!B)_\rho\;.
\end{equation*}
\unskip\nobreak\hfill $\square$
\end{lemma}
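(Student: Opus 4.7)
The plan is to verify both identities by direct computation, relying only on the additivity assumed for $H^{AB}$, the linearity of trace, and the definition of mutual information.

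First I would establish the internal energy identity. Expanding
\[
E(\rho^{AB}; H^{AB}) = \Tr[\rho^{AB}(H^A\otimes \mathds{1}^B + \mathds{1}^A\otimes H^B)]
\]
and using linearity of trace, the two terms split as $\Tr[\rho^{AB}(H^A\otimes\mathds{1}^B)] + \Tr[\rho^{AB}(\mathds{1}^A\otimes H^B)]$. Tracing out the idle factor in each term using $\rho^A = \Tr_B[\rho^{AB}]$ and $\rho^B = \Tr_A[\rho^{AB}]$ gives $\Tr[\rho^A H^A] + \Tr[\rho^B H^B] = E(\rho^A;H^A) + E(\rho^B;H^B)$, which is the first claim.

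Next I would derive the free energy identity by substituting the internal energy decomposition into the definition $F(\rho^{AB};H^{AB}) = E(\rho^{AB};H^{AB}) - \beta^{-1} S(AB)_{\rho}$. The bipartite entropy rewrites via the definition of mutual information as $S(AB)_\rho = S(A)_\rho + S(B)_\rho - I(A\!:\!B)_\rho$, so
\[
F(\rho^{AB};H^{AB}) = \bigl[E(\rho^A;H^A) - \beta^{-1}S(A)_\rho\bigr] + \bigl[E(\rho^B;H^B) - \beta^{-1}S(B)_\rho\bigr] + \beta^{-1}I(A\!:\!B)_\rho,
\]
and recognising the first two bracketed terms as $F(\rho^A;H^A)$ and $F(\rho^B;H^B)$ completes the argument.

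There is no real obstacle here: the lemma is essentially a bookkeeping identity that traces back to (i) additivity of $H^{AB}$ making the internal energy a sum of marginal expectations, and (ii) the mutual information being precisely the gap between $S(A)+S(B)$ and $S(AB)$. The only thing to be careful about is the sign convention, namely that $I(A\!:\!B)_\rho\geqslant 0$ enters with a \emph{positive} coefficient in $F(\rho^{AB};H^{AB})$, reflecting the physical fact that correlations store free energy relative to the product marginals.
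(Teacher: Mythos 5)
Your proof is correct and follows essentially the same route as the paper's: the internal energy identity via linearity of the trace and the defining property of the partial trace, and the free energy identity by substituting $S(AB)_\rho = S(A)_\rho + S(B)_\rho - I(A\!:\!B)_\rho$ into the definition of the nonequilibrium free energy. No gaps.
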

\begin{proof}
Note that by the definition of the partial trace, it holds that $\Tr[ \rho^{AB} L^A \otimes \one^B] = \Tr[\rho^A L^A ]$ for all $L^A$ and $\rho^{AB}$. The additivity of the internal energy follows trivially from the additivity of the Hamiltonian. Now note that $F(\rho^{AB};H^{AB})=E(\rho^{A B};H^{AB}) -\beta^{-1}S(A B)_{\rho}$. Observing that $S(A B)_{\rho} = S(A)_{\rho}+S(B)_{\rho}-I(A:B)_{\rho}$ completes the proof. 
\end{proof}

\emph{Operations} provide the most general description for how a quantum system may transform. In the Schr\"odinger picture, an operation acting in a system $A$ is defined as a completely positive (CP), trace non-increasing linear map $\Phi : \cL(\cH^A) \to \cL(\cH^A)$. We shall denote the consecutive application of operations $\Phi_1$ followed by $\Phi_2$ as $\Phi_2 \circ \Phi_1$. For each operation, there exists a Heisenberg picture dual $\Phi^*$, defined by the trace duality $\Tr[ \Phi^*(L^A) \rho^A ] = \Tr[ L^A \Phi(\rho^A) ]$ for all $\rho^A$ and $L^A$. $\Phi^*$ is a  sub-unital CP  linear map, i.e., $\Phi^*(\one^A) \leqslant \one^A$. Among the operations are \emph{channels}, which preserve the trace, and if $\Phi$ is a channel, then $\Phi^*$ is unital, i.e., $\Phi^*(\one^A) = \one ^A$. We shall denote the identity channel acting in $A$ as $\idch^A$, which satisfies $\idch^A(L^A) = L^A$ for all $L^A$.  An operation acting in a composite system $AB$ is \emph{local} if it can be written as $\Phi = \Phi^A \otimes \Phi^B$, such that $\Phi(L^A \otimes L^B) = \Phi^A(L^A) \otimes \Phi^B(L^B)$ for all $L^A$ and $L^B$. As such, $\Phi^A \otimes \idch^B$ is an operation that acts locally and non-trivially only in subsystem $A$.

\begin{lemma}\label{lemma:local-energy-change}
 Consider a bipartite thermodynamic system which transforms as $(\rho^{AB}_i; H_i^{AB}) \mapsto (\rho^{AB}_j; H_j^{AB})$, such that $\rho^{AB}_j =  \Phi^A \otimes \idch^B(\rho^{AB}_i)$, where $\Phi^A$ is a channel acting in $A$ and $\idch^B$ is the identity channel acting in $B$. The following hold:
 \begin{enumerate}[(i)]
     \item $\rho^A_j = \Phi^A(\rho^A_i)$ and $\rho^B_j = \rho^B_i$.
     \item If $H_k^{AB} = H^A_k + H^B$ for $k=i,j$, then $\Delta E^{AB}_{i \to j}  = \Delta E ^A_{i \to j } = \Tr[\Phi^A(\rho^A_i) H^A_j] - \Tr[\rho^A_i H^A_i]$. 
 \end{enumerate}
 \unskip\nobreak\hfill $\square$
\end{lemma}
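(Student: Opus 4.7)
The plan is to prove the two claims in order, since part (ii) will build directly on part (i). Both claims are essentially standard facts about local channels and additive Hamiltonians; the main task is just to unpack definitions carefully.

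For part (i), I would first verify the identity $\Tr_B \circ (\Phi^A \otimes \idch^B) = \Phi^A \circ \Tr_B$ as maps from $\cL(\cH^{AB})$ to $\cL(\cH^A)$. It suffices to check this on product operators $L^A \otimes L^B$, since both sides are linear and product operators span $\cL(\cH^{AB})$. On the left, $\Tr_B[(\Phi^A \otimes \idch^B)(L^A \otimes L^B)] = \Tr_B[\Phi^A(L^A) \otimes L^B] = \Tr[L^B]\,\Phi^A(L^A)$, and on the right $\Phi^A(\Tr_B[L^A \otimes L^B]) = \Tr[L^B]\,\Phi^A(L^A)$. Applying this to $\rho^{AB}_i$ yields $\rho^A_j = \Phi^A(\rho^A_i)$. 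For the $B$-marginal, an analogous product-operator check combined with trace preservation of $\Phi^A$, i.e.\ $\Tr[\Phi^A(L^A)] = \Tr[L^A]$, gives $\Tr_A \circ (\Phi^A \otimes \idch^B) = \Tr_A$, so $\rho^B_j = \rho^B_i$.

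For part (ii), I would invoke the internal-energy additivity already established in \lemref{lemma:free_energy_mutual}: given the additive Hamiltonians $H^{AB}_k = H^A_k + H^B$, we have $E(\rho^{AB}_k; H^{AB}_k) = E(\rho^A_k; H^A_k) + E(\rho^B_k; H^B)$ for $k=i,j$. Subtracting the two instants gives $\Delta E^{AB}_{i\to j} = \Delta E^A_{i\to j} + \Delta E^B_{i\to j}$. By part (i) the $B$-marginal is unchanged and its Hamiltonian is the same at both times, so $\Delta E^B_{i \to j} = 0$. Substituting $\rho^A_j = \Phi^A(\rho^A_i)$ into $\Delta E^A_{i\to j} = \Tr[\rho^A_j H^A_j] - \Tr[\rho^A_i H^A_i]$ yields the stated expression.

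There is no real obstacle here: the lemma is essentially bookkeeping that justifies the telescoping decomposition \eqref{eq:net-work-extracted}. The only subtlety worth flagging explicitly in the write-up is that the commutation $\Tr_B \circ (\Phi^A \otimes \idch^B) = \Phi^A \circ \Tr_B$ relies on the identity channel acting on $B$, and that the vanishing of $\Delta E^B_{i\to j}$ requires \emph{both} the preservation of $\rho^B$ (from part (i), which uses trace preservation of $\Phi^A$) \emph{and} the hypothesis that $H^B$ does not change between $t_i$ and $t_j$.
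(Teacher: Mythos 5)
Your proof is correct and follows essentially the same route as the paper: part (i) is the same elementary marginal computation (you work directly in the Schr\"odinger picture on product operators using trace preservation of $\Phi^A$, whereas the paper dualizes against arbitrary observables and uses unitality of $\Phi^{A*}$ --- these are equivalent), and part (ii) invokes \lemref{lemma:free_energy_mutual} and the invariance of the $B$-marginal and of $H^B$ exactly as the paper does.
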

\begin{proof}
 \begin{enumerate}[(i):]
 \item For all $L^A$ and $L^B$, it holds that
\begin{align*}
    \Tr[\rho^A_j L^A ] &= \Tr[ \Phi^A \otimes \idch^B(\rho^{AB}_i) (L^A \otimes \one^B)] = \Tr[\rho^{AB}_i \Phi^{A*} \otimes \idch^B(L^A \otimes \one^B)] \\
    & = \Tr[\rho^{AB}_i \Phi^{A*}(L^A) \otimes  \one^B ] = \Tr[ \rho^{A}_i \Phi^{A*}(L^A)] = \Tr[\Phi^A(\rho^A_i) L^A ], \\
    \Tr[\rho^B_j L^B ] & = \Tr[\Phi^A \otimes \idch^B(\rho^{AB}_i) (\one^A \otimes L^B) ] = \Tr[\rho^{AB}_i \Phi^{A*} \otimes \idch^B(\one^A \otimes L^B) ] \\
    & = \Tr[\rho^{AB}_i \one^A \otimes L^B ] = \Tr[\rho^B_i L^B ].
\end{align*}
Here, we have used the definition of the partial trace, the trace duality, and the fact that $\Phi^{A*}$ is unital while $\idch^B(L^B) = L^B$ for all $L^B$. Since $\Tr[\rho^A L^A ] = \Tr[\sigma^A L^A ]$ for all $L^A$ if and only if $\rho^A = \sigma^A$ completes the proof. 
\item  This follows from item (i), together with the additivity of the Hamiltonian, Lemma \ref{lemma:free_energy_mutual}, and the fact that $H^B_i = H^B_j = H^B$. 
 \end{enumerate}
\end{proof}

\begin{lemma}\label{lemma:unitary-isothermal-work}
    Consider a system $Y$ and a thermal bath $B$, which transform as $(\rho^{YB}_i; H^{YB}_i) \mapsto (\rho^{YB}_j; H^{YB}_j)$. Assume that $\rho^{YB}_i := \rho^Y_i \otimes \gamma^B$, and that  $\rho^{YB}_j = \Phi (\rho^{YB}_i)$ with $\Phi(\cdot) := U(\cdot) U^\dagger$ a unitary channel, and that   $H^{YB}_k := H^Y_k + H^B$ for $k=i,j$.  Then the extracted work from system $Y$ will read
   \begin{align*}
       W_\mathrm{ext}^Y = -\Delta E^{YB}_{i \to j} = -\Delta F^Y_{i \to j} - \beta^{-1} S_\mathrm{irr}^B, 
   \end{align*} 
where 
\begin{align*}
 S_\mathrm{irr}^B :=   I(Y:B)_{\rho_j} +  D( \rho^B_j \| \gamma^B)  \geqslant 0  
\end{align*}
is the irreversible entropy production, vanishing if and only if $\rho^{YB}_j = \rho^Y_j \otimes \gamma^B$.  
\unskip\nobreak\hfill $\square$
\end{lemma}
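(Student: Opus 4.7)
The plan is to derive the two equalities separately and then read off the non-negativity claim. For the first equality, the compound $YB$ undergoes a unitary evolution, so there is no exchange of heat with any system external to $YB$; Definition~\ref{defn:work-adiabatic} then immediately gives $W_\mathrm{ext}^Y = -\Delta E^{YB}_{i\to j}$. The real content of the lemma is therefore in the second equality, which will be obtained by rewriting $\Delta E^{YB}_{i\to j} = \Delta E^Y_{i\to j} + \Delta E^B_{i\to j}$ (using additivity of the Hamiltonian, as in Lemma~\ref{lemma:free_energy_mutual}) and then expressing the bath-energy term through the entropic quantities appearing in $S_\mathrm{irr}^B$.

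The key step will be to exploit conservation of the total von Neumann entropy under the unitary channel $\Phi$. Since $\rho^{YB}_i = \rho^Y_i \otimes \gamma^B$ factorizes, I would write $S(YB)_{\rho_i} = S(Y)_{\rho_i} + S(B)_{\gamma^B}$, and for the final state use $S(YB)_{\rho_j} = S(Y)_{\rho_j} + S(B)_{\rho_j} - I(Y{:}B)_{\rho_j}$. Equating these two expressions, rearranging, and then inserting $S(B)_{\gamma^B} - S(B)_{\rho_j} = D(\rho^B_j \| \gamma^B) - \beta\,\Delta E^B_{i\to j}$ (which is just the algebraic identity $-\Tr[\sigma \ln \gamma^B] = \beta\,\Tr[\sigma H^B] + \ln Z^B$ applied to $\sigma = \rho^B_j$ and $\sigma = \gamma^B$), I obtain
\begin{equation*}
\beta\,\Delta E^B_{i\to j} \;=\; I(Y{:}B)_{\rho_j} + D(\rho^B_j\|\gamma^B) - \Delta S^Y_{i\to j} \;=\; S_\mathrm{irr}^B - \Delta S^Y_{i\to j}.
\end{equation*}

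Substituting this back into $-\Delta E^{YB}_{i\to j} = -\Delta E^Y_{i\to j} - \Delta E^B_{i\to j}$ and recognizing $\Delta F^Y_{i\to j} = \Delta E^Y_{i\to j} - \beta^{-1}\Delta S^Y_{i\to j}$ (Definition~\ref{defn:free-energy}) yields the desired identity $-\Delta E^{YB}_{i\to j} = -\Delta F^Y_{i\to j} - \beta^{-1} S_\mathrm{irr}^B$. Finally, non-negativity of $S_\mathrm{irr}^B$ is immediate since both summands are relative entropies (with $I(Y{:}B)_{\rho_j} = D(\rho^{YB}_j \| \rho^Y_j \otimes \rho^B_j)$), and the simultaneous vanishing condition gives $\rho^{YB}_j = \rho^Y_j \otimes \gamma^B$. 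No real obstacle arises: the only conceptual ingredient is the unitary invariance of $S(YB)$ together with the Gibbs form of $\gamma^B$; the rest is bookkeeping.
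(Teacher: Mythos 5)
Your proof is correct and is essentially the same as the paper's: the only difference is bookkeeping order, since the paper first writes $-\Delta E^{YB}_{i\to j}=-\Delta F^{YB}_{i\to j}$ (via unitary invariance of $S(YB)$) and then splits the total free energy with Lemma~\ref{lemma:free_energy_mutual} into $\Delta F^Y$, $\Delta F^B=\beta^{-1}D(\rho^B_j\|\gamma^B)$, and $\beta^{-1}I(Y{:}B)_{\rho_j}$, whereas you split the energy first and run the entropy balance directly --- the ingredients (unitary invariance of the total entropy, additivity of the Hamiltonian, the mutual-information correction, and the Gibbs-state identity) are identical. The one cosmetic gloss is your opening step: Definition~\ref{defn:work-adiabatic} directly gives $W^{YB}_{\mathrm{ext}}=-\Delta E^{YB}_{i\to j}$ for the \emph{compound}, and the paper adds a one-line first-law argument ($W^Y_{\mathrm{ext}}=-\Delta E^Y_{i\to j}-Q^Y$ with $Q^Y=\Delta E^B_{i\to j}$) to identify this with the work extracted from $Y$ alone.
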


\begin{proof}
Since unitary evolution is adiabatic, then by Definition \ref{defn:work-adiabatic} the extracted work from the compound $YB$ will equal the decrease in internal energy, and so by Definition \ref{defn:free-energy} it holds that  $W_\mathrm{ext}^{YB} := - \Delta E^{YB}_{i \to j} = -\Delta F^{YB}_{i \to j} - \beta^{-1} \Delta S^{YB}_{i \to j} = -\Delta F^{YB}_{i \to j}$, with the last step following from the fact that unitary evolution does not change the von Neumann entropy. Now note that by the first law of thermodynamics, it holds that $W^Y_\ext = - \Delta E^Y_{i \to j} - Q^Y$, where $W^Y_\ext$ is the work extracted from system $Y$, and $Q^Y := \Delta E^B_{i \to j}$ is the heat that flows  to the bath $B$. By  the additivity of the Hamiltonian and Lemma \ref{lemma:free_energy_mutual}, it  follows that $W^Y_\ext = - \Delta E^Y_{i \to j} - \Delta E^B_{i \to j} = - \Delta E^{YB}_{i \to j} =: W_\mathrm{ext}^{YB}$. We may therefore write
\begin{align*}
 W_\mathrm{ext}^Y &= -\Delta F^{YB}_{i \to j}\\
 & = -\Delta F^{Y}_{i \to j} - \Delta F^{B}_{i \to j} -  \beta^{-1} I(Y:B)_{\rho_j}  \\
 & = -\Delta F^{Y}_{i \to j}   - \beta^{-1} \left[    I(Y:B)_{\rho_j} + D( \rho^B_j \| \gamma^B) \right].
\end{align*}
In the second line we have used Lemma \ref{lemma:free_energy_mutual} and the additivity of the Hamiltonian, together with the fact that system and bath are uncorrelated at initial time, and so $I(Y:B)_{\rho_i} = 0$. In the third line we use the fact that the bath is initially in thermal equilibrium, i.e., $\rho^{B}_i = \gamma^B$, together with Definition \ref{defn:free-energy} and the fact that the bath Hamiltonian, and hence the bath equilibrium free energy, does not change. Finally, we recall that the mutual information $I(Y:B)_{\rho_j}$ is non-negative and vanishes if and only if $\rho^{YB}_j = \rho^Y_j \otimes \rho^B_j$, whereas the relative entropy $D( \rho^B_j \| \gamma^B)$ is non-negative and vanishes if and only if $\rho^B_j = \gamma^B$~\cite{nielsen_chuang_2010}.
\end{proof}

\subsection{Proof of Proposition~\ref{prop:general_formula}}
\label{appendix:general_formula}

We shall first prove \eq{eq:extracable_holevo-go}. Given that feedback is implemented by a global unitary channel $\rho_2 \mapsto \rho_3 = \cF \otimes \idch ^{M B_2}(\rho_2)$, the extracted work will read 
\begin{align}\label{eq:app-feedback-work-1}
 W_{\ext,2 \to 3}^A &:= -\Delta E_{2 \to 3}  = \Tr[\rho_2 \, H_2] - \Tr[\cF \otimes \idch ^{M B_2}(\rho_2) H_3] \nonumber  \\
 & = \Tr[\rho^{B_1 A}_2 (H^{B_1} + H^A_2)] - \Tr[\rho^{B_1 A}_3(H^{B_1} + H^A_3)] \nonumber \\
 & = \sum_{k\in \cK} p_k \bigg(\Tr[ \gamma^{B_1} \otimes \rho_{2,k}^{A} \,( H^{B_1} + H^A_2)] - \Tr[ \cF_k (\gamma^{B_1} \otimes \rho_{2,k}^{A}) (H^{B_1} + H^A_3)] \bigg) \nonumber \\
 & = - \sum_{k\in \cK} p_k \bigg( \Delta F^{A}_{2 \to 3, k} + \beta^{-1} [ I(A:B_1)_{\rho_{3,k} } + D( \rho^{B_1}_{3,k} \| \gamma^{B_1})] \bigg).
\end{align}
Here, the second line follows from Lemma \ref{lemma:local-energy-change} and the fact that $\cF$ acts locally in $B_1 A K$,  and that the Hamiltonian at $t_2, t_3$ is additive with only the Hamiltonian of $A$ changing in time, and that the state of $K$ does not change. The third line follows from \eq{eq:state-after-measurement} and \eq{eq:state-after-feedback}. The final line follows from Lemma \ref{lemma:unitary-isothermal-work}.   Now let us note that we may write
\begin{align}\label{eq:app-feedback-work-2}
- \sum_{k\in \cK} p_k   \Delta F^{A}_{2 \to 3, k} &= \sum_{k\in \cK} p_k \bigg(\Tr[\rho^A_{2,k} H^A_2 ] - \Tr[\rho^A_{3,k} H^A_3 ] + \beta^{-1}[S(\rho^A_{3,k}) - S(\rho^A_{2,k})] \bigg)  \nonumber \\
& = \Tr[\rho ^A_2 H^A_2 ] - \Tr[\rho^A_{3} H^A_3 ] + \beta^{-1} \sum_{k \in \cK} p_k [S(\rho^A_{3,k}) - S(\rho^A_{2,k})] \nonumber \\
& = (\Tr[\rho ^A_2 H^A_2 ] - \Tr[\rho^A_0 H^A_0 ]) + (\Tr[\rho^A_0 H^A_0 ] - \Tr[\rho^A_{3} H^A_3 ]) \nonumber \\
& \qquad + \beta^{-1}  [I_{\GO} + S(\rho_3^A) - S(\rho_0^A) - I(A:K)_{\rho_3 }] \nonumber \\
& = \Delta E^A_{0 \to 2} -\Delta F^A_{0 \to 3} + \beta^{-1}  [I_{\GO} - I(A:K)_{\rho_3 }] \nonumber \\
& = \Delta E^A_{0 \to 2} -\Delta F^A_{0 \to 4} + \beta^{-1}  [I_{\GO} - I(A:K)_{\rho_3 }].   
\end{align}
In the second line we use the fact that $\sum_{k\in \cK} p_k \, \rho^A_{i, k} = \rho^A_i$. The third line is obtained by adding and subtracting $\Tr[\rho^A_0 H^A_0 ]$ , $\beta^{-1}   S(\rho_0^A) $, and $\beta^{-1}   S(\rho_3^A) $, and noting that $I_{\GO} =  S(\rho_0^A) - \sum_{k\in \cK} p_k S(\rho^A_{2,k})$ and $I(A:K)_{\rho_3} =  S(\rho_3^A) - \sum_{k\in \cK} p_k S(\rho^A_{3,k})$. The final line is obtained by noting that $\Delta F^A_{0 \to 4} = \Delta F^A_{0 \to 3} + \Delta F^A_{3 \to 4}$, and that $\Delta F^A_{3 \to 4}=0$ since both the state and Hamiltonian of system $A$ do not change between time step $t_3$ and $t_4$. Finally, since $W^A_{\ext, 0 \to 2} = - \Delta E^A_{0 \to 2}$, then by \eq{eq:app-feedback-work-1} and \eq{eq:app-feedback-work-2} we have that 
\begin{align*}
  W_\ext ^A &=   W^A_{\ext, 0 \to 2} + W_{\ext,2 \to 3}^A \\
  & = -\Delta F^A_{0 \to 4} + \beta^{-1}  \bigg( I_{\GO} - I(A:K)_{\rho_3 } - \sum_{k \in \cK} p_k [ I(A:B_1)_{\rho_{3,k} } + D( \rho^{B_1}_{3,k} \| \gamma^{B_1})] \bigg),
\end{align*}
and so we obtain \eq{eq:extracable_holevo-go}.

Next, we show Eq.~\eqref{eq:work_in_go}. Since the erasure step is implemented by the global unitary channel $\rho_3 \mapsto \rho_4 = \idch^{B_1 A} \otimes \cV(\rho_3)$, we have 
\begin{align*}
    W_{\In, 3 \to 4}^{MK} &:= \Delta E_{3 \to 4} = \Tr[\idch^{B_1 A} \otimes \cV(\rho_3) H_4] - \Tr[\rho_3 H_3] \nonumber \\
    & = \Tr[\cV(\rho_3^{MK} \otimes \gamma^{B_2}) (H^{MK} + H^{B_2})] - \Tr[\rho_3^{MK} \otimes \gamma^{B_2} (H^{MK} + H^{B_2})] \nonumber \\
    & = \Delta F^{MK}_{3 \to 4} + \beta^{-1} [I(MK:B_2)_{\rho_4} + D(\rho_4 ^{B_2} \| \gamma^{B_2})] \nonumber \\
    & = -\Delta F^{MK}_{0 \to 2} + \beta^{-1} [I(MK:B_2)_{\rho_4} + D(\rho_4 ^{B_2} \| \gamma^{B_2})].
\end{align*}
The second line follows from Lemma \ref{lemma:local-energy-change} and the fact that $\cV$ acts locally in $MK B_2$, and the fact that the Hamiltonian at $t_3, t_4$ is additive while the Hamiltonians of $MK$ and $B_2$ do not change. The third line follows from Lemma \ref{lemma:unitary-isothermal-work}. The final line follows from the assumption of erasure, i.e.,  $\rho_4^{MK} = \rho_0^{MK}$, so that $\Delta F^{MK}_{3 \to 4} = -\Delta F^{MK}_{0 \to 3}$, together with the fact that both the state and Hamiltonian of $MK$ do not change   between time steps $t_2$ and $t_3$, so that $-\Delta F^{MK}_{0 \to 3} = -\Delta F^{MK}_{0 \to 2} -\Delta F^{MK}_{2 \to 3} = -\Delta F^{MK}_{0 \to 2}$.  Given that $W_{\In, 0 \to 2}^{MK} = \Delta E^{MK}_{0 \to 2} = \Delta F^{MK}_{0 \to 2} + \beta^{-1} \Delta S_{0 \to 2}^{MK}$, we have that 
\begin{align}\label{eq:MK_in_2}
    W^{MK}_\In &= W_{\In, 0 \to 2}^{MK} + W_{\In, 3 \to 4}^{MK}  \nonumber \\      
    &= \Delta F^{MK}_{0 \to 2} + \beta^{-1} \Delta S_{0 \to 2}^{MK} - \Delta F^{MK}_{0 \to 2}  + \beta^{-1} [I(MK:B_2)_{\rho_4} + D(\rho_4 ^{B_2} \| \gamma^{B_2})] \nonumber \\
    & = \beta^{-1} [\Delta S_{0 \to 2}^{MK} + I(MK:B_2)_{\rho_4} + D(\rho_4 ^{B_2} \| \gamma^{B_2})].
\end{align}
Now note that in general, the following relationship holds:
    \begin{align}\label{eq:conditional-mutual-GO}
        I(A:M|K)_{\rho_2}&:=S(A|K)_{\rho_2}+S(M|K)_{\rho_2}-S(AM|K)_{\rho_2} \nonumber \\
        &=S(A|K)_{\rho_2} - S(A)_{\rho_0} + S(A)_{\rho_0} + S(MK)_{\rho_2} - S(MK)_{\rho_0} + S(MK)_{\rho_0} - S(AMK)_{\rho_2} \nonumber \\
         &=S(A|K)_{\rho_2} - S(A)_{\rho_0}  + S(MK)_{\rho_2} - S(MK)_{\rho_0} + S(AMK)_{\rho_0} - S(AMK)_{\rho_2} \nonumber \\
        &=-I_{\GO}+\Delta S^{MK}_{0\to2}-\Delta S^{AMK}_{0\to2}\;.
    \end{align}
The second line is obtained by adding and subtracting $S(A)_{\rho_0}$ and $S(MK)_{\rho_0}$, together with the definition $S(AM|K)_{\rho_2}:= S(AMK)_{\rho_2} - S(K)_{\rho_2}$ and $S(M|K)_{\rho_2}:= S(MK)_{\rho_2} - S(K)_{\rho_2}$. The third line is obtained by noting the fact that $\rho^{AMK}_0 = \rho^A_0 \otimes \rho^{MK}_0$ so that $S(A)_{\rho_0} + S(MK)_{\rho_0} = S(AMK)_{\rho_0}$. By combining Eq.~\eqref{eq:conditional-mutual-GO} and Eq.~\eqref{eq:MK_in_2}, we obtain the desired equality~\eq{eq:work_in_go}.

\subsection{Proof of Proposition \ref{prop:second-law-equiv}}\label{appendix:second-law-equiv}

By combining Eqs. \eqref{eq:extracable_holevo-go} and \eqref{eq:work_in_go}, we obtain
\begin{align}\label{eq:app-net-work}
W_{\textrm{tot}} &= W^A_\ext-W^{MK}_\In \nonumber \\
& = - \Delta F^{A}_{0 \to 4} -\beta^{-1} \bigg( \Delta S^{AMK}_{0 \to 2} + I(A:M|K)_{\rho_2} + I(A:K)_{\rho_3} + S_\mathrm{irr}^{B_1} + S_\mathrm{irr}^{B_2} \bigg).
\end{align}
Recall that the protocol is consistent with the overall second law of thermodynamics if and only if $W_{\textrm{tot}}\le-\Delta F^{AMK}_{0\to4}$. But now note that 
\begin{align} \label{eq:AMK-A}
    -\Delta F^{AMK}_{0\to 4}&=-\Delta F^A_{0\to 4} -\Delta F^{MK}_{0\to 4}-\beta^{-1}I(A:MK)_{\rho_4} \nonumber \\
    &=-\Delta F^A_{0\to 4}-\beta^{-1}I(A:MK)_{\rho_4} \nonumber \\
    & \leqslant -\Delta F^A_{0\to 4} \;,
\end{align}
where the first equality holds because of Lemma \ref{lemma:free_energy_mutual} and $I(A:MK)_{\rho_0}=0$, 
the second equality follows from the erasure condition $\rho^{MK}_4=\rho^{MK}_0$, and the  inequality follows from the non-negativity of the mutual information.   Then by \eq{eq:app-net-work}, the protocol is consistent with the overall second law if and only if 
\begin{align}\label{eq:overall-second-law-equiv-app}
   \Delta S^{AMK}_{0 \to 2} + I(A:M|K)_{\rho_2} + I(A:K)_{\rho_3} + S_\mathrm{irr}^{B_1} + S_\mathrm{irr}^{B_2} \geqslant I(A:MK)_{\rho_4}.
\end{align}
By rearranging the above, we obtain \eq{eq:overall-second-law-equiv}. Now recall that  the protocol is consistent with the  second law of ITh if and only if $W_{\textrm{tot}}\le-\Delta F^{A}_{0\to4}$. By the same arguments as before, only replacing $I(A:MK)_{\rho_4}$ in the right hand side of \eq{eq:overall-second-law-equiv-app} with $0$,  we obtain \eq{eq:second-law-ITh-equiv}. 

It is  clear that \eq{eq:overall-second-law-equiv} and \eq{eq:second-law-ITh-equiv} are equivalent if and only if erasure is perfect, that is, $\rho^{AMK}_4 = \rho^A_4 \otimes \rho^{MK}_0 = \rho^A_4 \otimes \rho^M_0 \otimes |0\>\<0|^K$, so that  $I(A:MK)_{\rho_4}=0$. But since in general $I(A:MK)_{\rho_4} \geqslant 0$, while \eq{eq:overall-second-law-equiv} always implies \eq{eq:second-law-ITh-equiv},  the converse implication does not always hold.  

\subsection{Proof of Proposition~\ref{prop:second-law-sufficient}}\label{appendix:second-law-sufficient}

To show that $\Delta S^{AMK}_{0\to2} \geqslant 0$ is sufficient for compatibility of the measurement process with the overall second law, we must show that the   right hand side of \eq{eq:overall-second-law-equiv} is never strictly positive. Given the non-negativity of the irreversible entropy production terms $S_\mathrm{irr}^{B_1}, S_\mathrm{irr}^{B_2}$,  it suffices to show that 
\begin{align*}
I(A:MK)_{\rho_4}   - I(A:M|K)_{\rho_2}  -I(A:K)_{\rho_3} \leqslant 0.    
\end{align*}
To this end, let us note that 
\begin{align}\label{eq:APP-2-3}
  I(A : M| K)_{\rho_2} &= \sum_{k \in \cK} p_k I(A:M)_{\rho_{k,2}} \nonumber \\
 &= \sum_{k \in \cK} p_k D( \rho^{AM}_{2,k} \| \rho^{A}_{2,k} \otimes \rho^{M}_{2,k} ) \nonumber \\
  & \geqslant  \sum_{k \in \cK} p_k D( \Lambda_k \otimes \idch^M( \rho^{AM}_{2,k}) \| \Lambda_k \otimes \idch^M(\rho^{A}_{2,k} \otimes \rho^{M}_{2,k}) ) \nonumber \\
  & = \sum_{k \in \cK} p_k D( \rho^{AM}_{3,k} \| \rho^{A}_{3,k} \otimes \rho^{M}_{3,k} ) \nonumber \\
  & =  \sum_{k \in \cK} p_k I(A:M)_{\rho_{k,3}} = I(A:M|K)_{\rho_3}.
\end{align}
Here, $\Lambda_k(\cdot) := \Tr_{B_1}[\cF_k(\gamma^{B_1} \otimes \cdot)]$ are the conditional channels  acting in $A$ during feedback,   the third line follows from the data processing inequality \cite{schumacher1996quantum}, and the fourth line follows from item (i) of Lemma \ref{lemma:local-energy-change}. Note that if feedback is pure unitary, so that $\Lambda_k(\cdot) = F_k^{A} (\cdot) F_k^{A\dagger}$, then the inequality above becomes an equality. 

Now notice that the following equality holds from the chain rule:
\begin{equation}\label{eq:APP-2-4}
    I(A:M|K)_{\rho_3}+I(A:K)_{\rho_3}=I(A:MK)_{\rho_3}\;.
\end{equation}
By \eq{eq:APP-2-3} and \eq{eq:APP-2-4}, it follows that 
\begin{align*}
I(A:MK)_{\rho_4}   - I(A:M|K)_{\rho_2}  -I(A:K)_{\rho_3} &\leqslant I(A:MK)_{\rho_4}   - I(A:M|K)_{\rho_3}  -I(A:K)_{\rho_3} \nonumber \\
& = I(A:MK)_{\rho_4} - I(A:MK)_{\rho_3} \nonumber \\
& = D(\rho_4^{AMK} \| \rho_4^{A} \otimes \rho_4^{MK} ) - D(\rho_3^{AMK} \| \rho_3^{A} \otimes \rho_3^{MK} ) \nonumber \\
& = D(\idch^A \otimes \Phi (\rho_3^{AMK}) \| \idch^A \otimes \Phi( \rho_3^{A} \otimes \rho_3^{MK}) ) \nonumber \\
& \qquad - D(\rho_3^{AMK} \| \rho_3^{A} \otimes \rho_3^{MK} ) \nonumber \\
& \leqslant 0.
\end{align*}
Here, $\Phi(\cdot) := \Tr_{B_2}[\cV(\cdot \otimes \gamma^{B_2})]$ is the erasure channel acting in $MK$, the fourth line follows from item (i) of Lemma \ref{lemma:local-energy-change}, and the final line follows from the data processing inequality.

Now, recall  from \propref{prop:second-law-equiv} that if a feedback control and erasure protocol is consistent with the overall second law, then it will necessarily also be consistent with the second law of ITh. Therefore, a measurement process satisfying $\Delta S^{AMK}_{0\to2} \geqslant 0$ is guaranteed to be compatible with the second law of ITh.

Finally,  we wish to show that if the instrument $\cM:= \{\cM_k : k \in \cK\}$ that is responsible for pointer objectification implements a bistochastic channel---a CP linear map that preserves both the trace and the unit---then $\Delta S^{AMK}_{0 \to 2} \geqslant 0$ will necessarily hold. Note that the channel implemented by $\cM$, i.e., $\cM_\cK(\cdot) := \sum_{k \in \cK} \cM_k(\cdot)$,  is bistochastic if $\cM_\cK(\one^M) = \one^M$. 

Recall that $\rho^{AMK}_2 = \sum_{k\in \cK} p_k \, \rho^{AM}_{2,k} \otimes |k\>\<k|^K$. Since the classical register $K$ is not entangled with $AM$, it follows that  $S(K|AM)_{\rho_2}\ge 0$. 
Thus, we have
\begin{align*}
\Delta S^{AMK}_{0\to 2} &= S(AMK)_{\rho_2}-S(AMK)_{\rho_0} \\
& =S(AM)_{\rho_2}+S(K|AM)_{\rho_2}-S(AM)_{\rho_0} \\
& \ge S(AM)_{\rho_2}-S(AM)_{\rho_0}    .
\end{align*}
 Given that unitary channels are bistochastic, then so long as the  channel  $\cM_\cK$ is also bistochastic, then so too is the composition $\Theta := (\idch^A \otimes \cM_\cK)\circ \cU$.  Now note that we may equivalently write the von Neumann entropy as $S(A)_\rho = - 
 D(\rho^A \| \one^A)$. As such, we have that 
 \begin{align*}
\Delta S^{AMK}_{0\to 2} & \geqslant S(AM)_{\rho_2} - S(AM)_{\rho_0}  \\
& = D( \rho^{AM}_0\| \one^{AM}) - D( \rho^{AM}_2 \| \one^{AM}) \\
& = D(\rho^A_0 \otimes \rho^M_0 \| \one^{AM}) - D(\Theta(\rho^A_0 \otimes \rho^M_0) \| \one^{AM}) \\
& = D(\rho^A_0 \otimes \rho^M_0 \| \one^{AM}) - D(\Theta(\rho^A_0 \otimes \rho^M_0) \| \Theta(\one^{AM})) \\
& \geqslant 0.
 \end{align*}
Here, in the fourth line we have used the bistochasticity of $\Theta$, and the  final line follows from the data processing inequality.

A paradigmatic example of an objectification process that is bistochastic is given by the L\"uders instrument. For any observable $\M$,  the operations of the corresponding $\M$-compatible  L\"uders instrument read $\cM^L_k(\cdot) := \sqrt{\M_k} (\cdot) \sqrt{\M_k}$. These are also known as ``square-root measurements''. It is clear that the channel implemented by a L\"uders instrument is bistochastic, since $\cM_\cK^L(\one^M) = \sum_{k \in \cK} \M_k = \one^M$. However,  every observable admits instruments that are not of the L\"uders type,  but which nonetheless implement a bistochastic channel---for example, the instrument with operations $\cM_k := \Phi \circ \cM^L_k$, where $\Phi$ is some arbitrary  bistochastic channel.

\subsection{Proof of Theorem \ref{theorem:second-law-necessary}}\label{appendix:second-law-necessary}
Let us note that the only term on the right hand side of  \eq{eq:second-law-ITh-equiv}  that is fixed by the measurement process alone is $-I(A:M|K)_{\rho_2}$.  Therefore, the right hand side of this equation, given a fixed measurement process but for all possible subsequent feedback and erasure processes,   is upper bounded as
\begin{align}\label{eq:RHS-second-law-ITh-upper-bound}
-I(A:M|K)_{\rho_2} -I(A:K)_{\rho_3} -S_\mathrm{irr}^{B_1} - S_\mathrm{irr}^{B_2} \leqslant -I(A:M|K)_{\rho_2}.    
\end{align}
This follows from the non-negativity of the mutual information and the entropy production terms. Imposing that the inequality in \eq{eq:second-law-ITh-equiv} must be satisfied even when the right hand side obtains the upper bound above, we thus arrive at \eq{eq:second-law-ITh-necessary-1}. Note that the upper bound of \eq{eq:RHS-second-law-ITh-upper-bound} is achievable in the limit where  feedback and erasure  are quasistatic, so that $S_\mathrm{irr}^{B_1} = S_\mathrm{irr}^{B_2}=0$,  and such that for all measurement outcomes  the feedback process transforms the target system to the same final state, i.e.,  $\rho^A_{3,k} = \rho^A_3$ for all $k$, so that $I(A:K)_{\rho_3} = 0$.  

To show that  \eq{eq:second-law-ITh-necessary-1} is equivalent to \eq{eq:second-law-ITh-necessary-2}, let us note that 
\begin{align}\label{eq:entropy-change-measurement-equiv}
\Delta S^{AMK}_{0 \to 2} &:= S(AMK)_{\rho_2} - S(AMK)_{\rho_0}  \nonumber  \\
& = S(AM|K)_{\rho_2} + S(K)_{\rho_2} - S(AMK)_{\rho_0} \nonumber \\
& = S(AM|K)_{\rho_2} + \mathscr{H}(\{p_k\}) - S(A)_{\rho_0} - S(M)_{\rho_0}.
\end{align}
In the second line we use the definition of the conditional entropy, whereas in the final line we use the fact that $\rho^K_{2} = \sum_k p_k |k\>\<k|^K$  and that $\rho_0^{AMK} = \rho^A_0 \otimes \rho^M_0 \otimes |0\>\<0|^K$. Moreover, let us note that by the definition of the conditional mutual information, it holds that 
\begin{align}\label{eq:cond-mutual-info-measurement-equiv}
-I(A:M|K)_{\rho_2} = -S(A|K)_{\rho_2} -S(M|K)_{\rho_2} + S(AM|K)_{\rho_2}  .  
\end{align}
By inserting Eqs. \eqref{eq:entropy-change-measurement-equiv} and \eqref{eq:cond-mutual-info-measurement-equiv} in \eq{eq:second-law-ITh-necessary-1} gives us \eq{eq:second-law-ITh-necessary-2}.

Finally, we shall  show that \eq{eq:second-law-ITh-necessary-1} is also necessary for the compatiblity of the measurement process with the overall second law. To this end, let us consider a feedback control and erasure protocol, and assume that the measurement process violates \eq{eq:second-law-ITh-necessary-1}, i.e., assume that $\Delta S^{AMK}_{0 \to 2} < -I(A:M|K)_{\rho_2}$, but such that the protocol is consistent with the overall second law, i.e., \eq{eq:overall-second-law-equiv}.  This gives us the inequality
\begin{align*}
 \Delta S^{AMK}_{0\to2} & >  \Delta S^{AMK}_{0\to2} +  I(A:MK)_{\rho_4}   -I(A:K)_{\rho_3} -S_\mathrm{irr}^{B_1} - S_\mathrm{irr}^{B_2}  .
\end{align*}
Assume also that the feedback and erasure processes are ideal and quasistatic, so that  $I(A:K)_{\rho_3} = S_\mathrm{irr}^{B_1} = S_\mathrm{irr}^{B_2} = 0$. In such a case the above inequality becomes
\begin{align*}
 \Delta S^{AMK}_{0\to2} & >  \Delta S^{AMK}_{0\to2} +  I(A:MK)_{\rho_4}.
\end{align*}
But by the non-negativity of the mutual information, this inequality  cannot be satisfied. As such, if a measurement process violates \eq{eq:second-law-ITh-necessary-1}, then it will necessarily violate \eq{eq:overall-second-law-equiv} for some feedback and erasure process. It follows that \eq{eq:second-law-ITh-necessary-1} is necessary for compatibility of the measurement process with the overall second law. 

\subsection{A measurement process that is incompatible with the second laws} \label{appendix:bistochastic}

Recall from  \propref{prop:second-law-sufficient} that a necessary condition for the incompatibility of the measurement process  with the second laws is that pointer objectification must not be bistochastic.  This is always possible; for example, a measure and prepare instrument $\cM_k(\cdot) = \Tr[\cM_k( \cdot)] |\psi\>\<\psi|^M$, where $|\psi\>^M$ is a fixed,  arbitrary pure state of $M$. It is trivial that $\cM_\cK$ is not bistochastic, since $\cM_\cK(\one^M) = \Tr[\one^M] |\psi\>\<\psi|^M \ne \one^M$.  We shall now use a measurement process utilizing just such a pointer objectification, demonstrating that it is incompatible with the second laws. 

Let $(\cH^M, \rho^M_0, \cU,\cM)$ be a measurement process for a L\"uders instrument $\cA^L_k(\cdot) := \A_k (\cdot) \A_k$, compatible with a projection valued measure $\A$,  acting in the target system. Here, we choose $\cM$ to be compatible with a projection valued measure $\M$, and we choose $\rho^M_0$ to be a mixed state, albeit of sufficiently low rank so that our model is in accordance with \propref{prop:rank-requirement-efficient-instrument}. Recall that any instrument $\cM$ that is compatible with the same POVM will realise the same instrument acting in the target system. Therefore, let us choose this instrument to be nuclear, with the operations $\cM_k(\cdot) = \Tr[\M_k \cdot] |\psi\>\<\psi|^M$, where $|\psi\>^M$ is a fixed,  arbitrary pure state of $M$.   Now let us choose one particular outcome $k=h$, and choose the input state of the target system  so that it has support only in the eigenvalue-1 eigenspace of the effect $\A_h$, i.e., $\A_k \rho^A_0 = \delta_{k,h}\rho^A_0$. In such a case, it will hold that $p_k = \delta_{k,h}$, and so $\mathscr{H}(\{p_k\}):= -\sum_{k \in \cK} p_k \ln p_k  = 0$. Moreover, we have that $\rho^A_{2,k} = \delta_{k,h} \rho^A_0$, so that $I_{\GO} = 0$. But, given the choice of instrument $\cM$ acting in the memory, it holds that $\rho^M_{2,k} = \delta_{k,h} |\psi\>\<\psi|^M$, so that $J_{\GO} = S(M)_{\rho_0} >0$. Our protocol therefore gives the inequality
\begin{align*}
\mathscr{H}(\{p_k\}) <    I_{\GO} +  J_{\GO},
\end{align*}
which contradicts \eq{eq:second-law-ITh-necessary-2} and so, by  \thmref{theorem:second-law-necessary}, violates the second law of ITh and the overall second law for some feedback and erasure processes. Indeed, note also that in this model, we have $\rho^{AMK}_2 = \rho^A_0 \otimes |\psi\>\<\psi|^M \otimes |h\>\<h|^K$, where the lack of correlations between $A$ and $M$ follows from the fact that $\cM$ is a nuclear instrument. In such a case, it holds that $\Delta S^{AMK}_{0\to 2} = -S(M)_{\rho_0} <0$, whereas $-I(A:M|K)_{\rho_2} = -I(A:M)_{\rho_2} = 0$. It follows that  
\begin{align*}
\Delta S^{AMK}_{0\to 2} <    -I(A:M|K)_{\rho_2},
\end{align*}
which contradicts \eq{eq:second-law-ITh-necessary-1}.

\subsection{Efficient instruments}\label{appendix:efficient-instrument-rank}

\begin{proposition}\label{prop:rank-requirement-efficient-instrument}
 Let $(\cH^M, \rho^M_0, \cU, \cM)$ be a measurement scheme for an  instrument $\cA$ compatible with an observable $\A := \{\A_k : k = 0, \dots, N\}$ acting in $A$, where $N$ is the number of distinct measurement outcomes, and where $\A_0 = \zero^A$ is a null effect. Assume that $\cM$ is compatible with a projection valued measure $\M := \{\M_k : k=0, \dots, N\}$ acting in $M$,  and denote $\cH^{ M_k} :=  \supp(\M_k)$.  Assume that the effects of $\A$, excluding the null effect $\A_0$, are linearly independent. Then $\cA$ is efficient only if 
\begin{align*}
 \rank{\rho^M_0} \leqslant \frac{\sum_{k=1}^N\dim(\cH^{ M_k} )}{N} \leqslant \frac{\dim(\cH^M)}{N},   
\end{align*}
with the second inequality becoming an equality  if and only if $\M_0 = \zero^M$.
\unskip\nobreak\hfill $\square$
\end{proposition}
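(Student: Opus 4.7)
The strategy is to diagonalize $\rho^M_0$, explicitly build a Kraus decomposition of each $\cA_k$ coming from the measurement scheme, and then exploit efficiency together with linear independence of $\{\A_k\}_{k\ge 1}$. Fix a spectral resolution $\rho^M_0=\sum_{i=1}^r\lambda_i|\phi_i\rangle\langle\phi_i|^M$ with all $\lambda_i>0$ and $r=\rank{\rho^M_0}$, and an orthonormal basis $\{|e_j\rangle^M\}$ of $\cH^M$ adapted to the block structure, so that $\M_k=\sum_{j\in J_k}|e_j\rangle\langle e_j|^M$ with $|J_k|=\dim(\cH^{M_k})$ and $\bigsqcup_{k=0}^N J_k$ indexes the basis. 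Setting $W_{j,i}:=\sqrt{\lambda_i}\,(\one^A\otimes\langle e_j|^M)\,U\,(\one^A\otimes|\phi_i\rangle^M)\in\cL(\cH^A)$, a direct computation yields the Kraus expansion
\begin{equation*}
\cA_k(\rho^A)=\sum_{i=1}^r\sum_{j\in J_k}W_{j,i}\,\rho^A\,W_{j,i}^\dagger\qquad (k=0,1,\dots,N).
\end{equation*}
Because $\A_0=\zero^A$ forces $\cA_0$ to be the zero operation, $W_{j,i}=0$ for every $j\in J_0$ and every $i$.

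Efficiency says $\cA_k(\rho)=L_k\rho L_k^\dagger$ has Choi rank one, so for $k\ge 1$ the Kraus uniqueness theorem supplies scalars $\alpha_{k,j,i}\in\mathbb{C}$ with $W_{j,i}=\alpha_{k,j,i}L_k$ for all $j\in J_k$ and $i=1,\dots,r$. Unitarity of $U$ together with orthonormality of $\{|\phi_i\rangle^M\}$ gives $\sum_j W_{j,i}^\dagger W_{j,i'}=\lambda_i\delta_{i,i'}\one^A$; inserting the previous identity and writing $\vec\alpha_{k,i}:=(\alpha_{k,j,i})_{j\in J_k}\in\mathbb{C}^{|J_k|}$ transforms this into
\begin{equation*}
\sum_{k=1}^N\langle\vec\alpha_{k,i},\vec\alpha_{k,i'}\rangle\,\A_k=\lambda_i\delta_{i,i'}\one^A=\sum_{k=1}^N\lambda_i\delta_{i,i'}\A_k,
\end{equation*}
where $\A_k=L_k^\dagger L_k$ and I have used $\sum_{k=1}^N\A_k=\one^A$.

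The linear independence of $\A_1,\dots,\A_N$ then lets me match the coefficients on each side block by block: $\langle\vec\alpha_{k,i},\vec\alpha_{k,i'}\rangle=\lambda_i\delta_{i,i'}$ for every $k\in\{1,\dots,N\}$. With $\lambda_i>0$, this exhibits $r$ nonzero, pairwise orthogonal vectors in $\mathbb{C}^{|J_k|}$, so $r\le\dim(\cH^{M_k})$ for each individual $k\ge 1$. Averaging over $k$ gives $\rank{\rho^M_0}\le N^{-1}\sum_{k=1}^N\dim(\cH^{M_k})$, which is the first inequality. The second inequality, and its equality condition, follow immediately from $\sum_{k=0}^N\dim(\cH^{M_k})=\dim(\cH^M)$, saturated precisely when $\dim(\cH^{M_0})=0$, i.e.\ $\M_0=\zero^M$.

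The only subtle step is the uniform application of Kraus uniqueness across the index $i$ arising from the mixedness of $\rho^M_0$: collinearity with $L_k$ must be enforced simultaneously for every $(j,i)\in J_k\times\{1,\dots,r\}$, and it is exactly this uniformity that allows the $k$-sum to decouple cleanly under the linear-independence hypothesis. The remainder is routine bookkeeping.
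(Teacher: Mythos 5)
Your proof is correct, and it takes a genuinely different route from the paper's. The paper first invokes the fact that an efficient instrument compatible with linearly independent effects is \emph{extremal}, then derives a rank bound valid for any extremal instrument by decomposing $\rho^M_0$ into pure components $|\phi_i\>$, relating the Kraus operators of the schemes $(\cH^M,|\phi_i\>,\cU,\cM)$ to a minimal representation via Choi isometries, and counting mutually orthogonal vectors in $\bigoplus_{k}\cH^{M_k}$; this yields $\rank{\rho^M_0}\le \sum_k\dim(\cH^{M_k})/\sum_k R_k$ for general Kraus ranks $R_k$, specialized to $R_k=1$ at the end. You instead exploit efficiency directly: Choi-rank one forces every Kraus operator $W_{j,i}$ in the block $J_k$ to be a scalar multiple of $L_k$, the unitarity relation $\sum_j W_{j,i}^\dagger W_{j,i'}=\lambda_i\delta_{i,i'}\one^A$ becomes an operator identity in the span of $\{\A_k\}_{k\ge1}$, and linear independence lets you match coefficients blockwise to obtain $r$ pairwise orthogonal nonzero vectors in $\mathbb{C}^{|J_k|}$. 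This bypasses the extremality machinery (the D'Ariano--Perinotti--Sedl\'ak criterion) entirely and actually yields the \emph{stronger} per-outcome bound $\rank{\rho^M_0}\le\dim(\cH^{M_k})$ for every $k\ge1$, i.e.\ a bound by the minimum rather than the average; the stated inequality follows by summing over $k$. What the paper's longer argument buys in exchange is generality: it covers arbitrary extremal instruments with Kraus ranks $R_k\ge1$, whereas your argument is tailored to the efficient case. All the delicate steps in your write-up check out: the restriction of the $j$-sum to $\bigsqcup_{k\ge1}J_k$ is justified because $\A_0=\zero^A$ forces $W_{j,i}=0$ for $j\in J_0$, the identification $L_k^\dagger L_k=\A_k$ with $\sum_{k\ge1}\A_k=\one^A$ is correct, and the uniform application of the Kraus-uniqueness isometry across the index $i$ is legitimate since $\{W_{j,i}\}_{j\in J_k,\,i}$ is a single Kraus representation of the single CP map $\cA_k$.
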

\begin{proof}
Note that Assumption (A-6) assumes that the outcome associated with projecting $M$ onto the subspace $\cH^{M_0}$ is (statistically) never observed, i.e., it is observed with probability zero.  For this reason, in what follows, we need to introduce the effect $\M_0$ of the pointer observable, associated with a null effect $\A_0=\zero^A$ for the system observable, which makes the presentation a little cumbersome.

To prove the claim, we first note that an efficient instrument compatible with an observable with linearly independent effects is \emph{extremal} \cite{Pellonpaa2013}; given the instruments $\cA, \cA', \cA ''$, all with the same value space $\cK$, $\cA$ is extremal if for any  $\lambda \in (0, 1)$, we may write $\cA_k(\cdot) = \lambda \cA_k'(\cdot) + (1 - \lambda) \cA_k''(\cdot)$ only if $\cA = \cA'= \cA''$. That is, an instrument $\cA$ is extremal if it cannot be written as a convex combination of distinct instruments.  As such, we shall first obtain necessary conditions on the rank of $\rho^M_0$ that must be satisfied for  the measurement scheme to implement a  general extremal instrument $\cA$.

Let us write $\rho^M_0 = \sum_{i=1}^r q_i |\phi_i\>\<\phi_i|$, where $|\phi_i\>$ are mutually orthogonal unit vectors, $\{q_i\}$ is a probability distribution, and $r = \rank{\rho^M_0}$. By linearity, for each $i$ it holds that $(\cH^M, |\phi_i\>, \cU, \cM)$ is a measurement scheme for an instrument $\cA^{(i)}$, such that $\sum_i q_i \cA_k^{(i)}(\cdot) = \cA_k(\cdot)$ for all $k$. Note that since outcome $k=0$ of the pointer observable is associated with the null effect $\A_0 = \zero^A$, then it holds that $\cA_0 (\cdot) = \cA_0^{(i)}(\cdot) = \zero^A$. Denoting the (projection) effects of the pointer observable $\M$ as $\M_k = \sum_\mu |\psi_{k, \mu}\>\<\psi_{k, \mu}|$, where $\{|\psi_{k, \mu}\>\}$ is an orthonormal basis that spans $\cH^{M}$, then for each $i$ and $k$, by \eq{eq:system-instrument} we may write 
\begin{align*}
\cA_k^{(i)} (\cdot) = \Tr_M[(\one^A \otimes \M_k) U(\cdot \otimes |\phi_i\>\<\phi_i|)U^\dagger] = \sum_\mu  L_{k, \mu}^{(i)} (\cdot) 
 L_{k, \mu}^{(i)^\dagger},   
\end{align*}
where the Kraus operators read
\begin{align*}
L_{k, \mu}^{(i)} = V_{\psi_{k,\mu}}^\dagger U V_{\phi_i}.    
\end{align*}
 Here,  $V_{\varphi} : \cH^A \to \cH^A \otimes \cH^M, \ket{\xi} \mapsto \ket{\xi} \otimes \ket{\varphi}$ are linear isometries defined by the unit vector $|\varphi\> \in \cH^M$, which satisfy 
\begin{align*}
& V_{\varphi}^\dagger \one^{AM} V_{\varphi'} = \<\varphi|\varphi'\> \one^A, &V_{\varphi} \one^A V_{\varphi'}^\dagger = \one^A \otimes |\varphi\>\<\varphi'| .   
\end{align*}
Noting that $\sum_{k,\mu} |\psi_{k,\mu}\>\<\psi_{k,\mu}| = \one^M$, it follows that for every $i\ne j$, it holds that 
\begin{align}\label{eq:Kraus-null}
 \sum_{k,\mu}   L_{k, \mu}^{(i)\dagger } L_{k, \mu}^{(j)} &=  \sum_{k,\mu}  V_{\phi_i}^\dagger U^\dagger V_{\psi_{k,\mu}} \one^A V_{\psi_{k,\mu}}^\dagger  U V_{\phi_j}  =  V_{\phi_i}^\dagger \one^{AM} V_{\phi_j} =\zero.
\end{align}
 Let $\{L_{k,\nu} \, | \, \nu = 1, \dots , R_k\}$ be a minimal Kraus representation for the operation $\cA_k$, i.e.,  where $L_{k,\nu}$ are linearly independent and  $R_k$ is the Kraus-rank of $\cA_k$. Note that since $\A_0=\zero^A$, then $L_{0,\nu}=\zero^A$. Now assume that $\cA$ is an extremal instrument. This implies that $\cA_k = \cA_k^{(i)}$ for all $i$ and $k$. As shown in \cite{Choi1975}, for each $i$  there exists an isometry $[u_{\mu,\nu}^{(i)} \in \mathds{C}]$ such that
\begin{align}\label{eq:isometry-Kraus}
& L_{k,\mu}^{(i)} = \sum_\nu u_{\mu, \nu}^{(i)} L_{k,\nu},   & \sum_{\mu} u_{\mu, \nu}^{(i)*} u_{\mu, \nu'}^{(i)} = \delta_{\nu, \nu'}.
\end{align}
By  \eq{eq:Kraus-null}, \eq{eq:isometry-Kraus}, and orthonormality of $\{| \psi_{k, \mu}\>\}$, we may thus write for every $i\ne j$ the following:
\begin{align}\label{eq:extremality-Kraus-equality}
  \zero &=  \sum_{k,\mu, \mu'}   L_{k, \mu}^{(i)\dagger } L_{k, \mu '}^{(j)} \<\psi_{k,\mu}| \psi_{k, \mu'}\> \nonumber \\
  & = \sum_{k,\mu, \mu'}  \bigg(\sum_\nu u_{\mu, \nu}^{(i)*} L_{k,\nu}^\dagger \bigg) \bigg( \sum_{\nu'} u_{\mu', \nu'}^{(j)} L_{k,\nu'}\bigg)  \<\psi_{k,\mu}| \psi_{k, \mu'}\> \nonumber \\
  & = \sum_{k, \nu, \nu'}  L_{k,\nu}^\dagger L_{k,\nu'}  \<\psi_{k,\nu}^{(i)}| \psi_{k, \nu'}^{(j)}\>,
\end{align}
where
\begin{align}\label{eq:isometry-pointer-basis}
|\psi_{k,\nu}^{(i)}\> := \sum_{\mu}  u_{\mu, \nu}^{(i)} | \psi_{k, \mu}\> \in \supp(\M_k) \equiv \cH^{M_k}.   
\end{align}
As shown in \cite{DAriano2011}, $\cA$ is an extremal instrument if and only if the set 
\begin{align*}
\{ L_{k,\nu}^\dagger L_{k,\nu'} \, | \, k = 1, \dots, N ; \nu,\nu'= 1, \dots, R_k \}    
\end{align*} 
is linearly independent. As such, the equality condition in \eq{eq:extremality-Kraus-equality} holds only if  $\<\psi_{k,\nu}^{(i)}| \psi_{k, \nu'}^{(j)}\> =0$ for all $k >0$,  $\nu, \nu'$, and $i\ne j$. Now, by \eq{eq:isometry-Kraus} and \eq{eq:isometry-pointer-basis}, together with the fact that $\<\psi_{k,\mu}|\psi_{k', \mu'}\> = \delta_{k,k'} \delta_{\mu, \mu'}$, it is easily verified that $\<\psi_{k,\nu}^{(i)}|\psi_{k',\nu'}^{(i)}\> =\delta_{k,k'} \delta_{\nu, \nu'}$ for every $i$. Indeed, since for every $i$,  $|\psi_{k,\nu}^{(i)}\> \in \cH^{M_k}$,  then it also holds that $\<\psi_{k,\nu}^{(i)}|\psi_{k',\nu'}^{(j)}\>= 0$ whenever $k\ne k'$.  It follows that  
\begin{align*}
   \{|\psi_{k,\nu}^{(i)}\> \in \bigoplus_{k=1}^N \cH^{M_k} \, | \,  k =1, \dots, N ; \nu = 1, \dots , R_k ; i = 1, \dots , \rank{\rho^M_0}\} 
\end{align*}
must be a set of mutually orthogonal vectors.  The cardinality of the above set is easily computed to be $\rank{\rho^M_0} \sum_{k=1}^N R_k$. But since $\bigoplus_{k=1}^N \cH^{M_k}$ can only contain at most $\dim(\bigoplus_{k=1}^N \cH^{M_k}) = \sum_{k=1}^N \dim(\cH^{M_k})$ mutually orthogonal vectors, then $\cA$ is extremal only if 
\begin{align*}
 \rank{\rho^M_0} \leqslant \frac{\sum_{k=1}^N \dim(\cH^{M_k})}{\sum_{k=1}^N R_k}.
\end{align*}
Now assume that $\cA$ is an efficient instrument. It holds that $R_k = 1$ for each $k$, and $\cA$ is an extremal instrument if and only if  $\{ L_{k}^\dagger L_{k} = \A_k \, | \, k = 1, \dots , N \}$, i.e., the non-trivial effects of the measured observable $\A$ in $A$, are linearly independent. This completes the proof.  
\end{proof}

\section*{Data availability}

No datasets were generated or analysed during the current study.

\bibliography{myref}

\section*{Acknowledgments}
The authors would like to thank Arshag Danageozian, Marco Genoni, Masahito Hayashi, Kenta Koshihara, Yosuke Mitsuhashi, Nelly H.Y. Ng, Yoshifumi Nakata, Takahiro Sagawa, Valerio Scarani, Jeongrak Son, and Philipp Strasberg for their helpful comments and fruitful discussions.
S.~M. acknowledges the ``Nagoya University Interdisciplinary Frontier Fellowship'' supported by Nagoya University and JST, the establishment of university fellowships towards the creation of science technology innovation, Grant Number JPMJFS2120 and ``THERS Make New Standards Program for the Next Generation Researchers'' supported by JST SPRING, Grant Number
JPMJSP2125.
M. H. M. acknowledges support from the European Union under project ShoQC within ERA-NET Cofund in Quantum Technologies (QuantERA) program, from the Slovak Academy of Sciences under IMPULZ project No. IM-2023-79 (OPQUT), as well as from projects VEGA 2/0183/21 (DESCOM) and  APVV-22-0570 (DeQHOST).
K.~K. acknowledges support from JSPS Grant-in-Aid for Early-Career Scientists, No. 22K13972; from MEXT-JSPS Grant-in-Aid for Transformative Research Areas (A) ``Extreme Universe'', No. 22H05254.
F.~B. acknowledges support from MEXT Quantum Leap Flagship Program (MEXT QLEAP) Grant No.~JPMXS0120319794, from MEXT-JSPS Grant-in-Aid for Transformative Research Areas (A) ``Extreme Universe'' No.~21H05183, and from JSPS KAKENHI, Grants No.~20K03746 and No.~23K03230.

\section*{Author contributions}

S.~M. and M.~H.~M. contributed equally to the conception of the ideas and the execution of the work. K.~S. contributed to the initial formulation of the ideas. K.~K. contributed to clarifying the logic of the arguments presented. F.~B. contributed to the conception of the ideas and the execution of the work. All authors contributed to the interpretation and discussion of the results.

\section*{Competing interests}
The authors declare no competing interests.

\end{document}